\let\accentvec\vec
\documentclass[%envcountsame,
%draft
]{llncs}
\let\vec\accentvec
\usepackage{amsmath}
\usepackage{amssymb}
\usepackage{latexsym}
\usepackage{amsfonts} 
\usepackage{proof}
\usepackage{color}
\usepackage[english]{babel}
\usepackage{hyperref}
\usepackage{stmaryrd}
\usepackage{dice2014}
\usepackage{editing}

\pagestyle{plain}

\title{A New Term Rewriting Characterisation of
ETIME functions}

\author{Martin Avanzini %\inst{1} 
and
Naohi Eguchi%\inst{2}%
\thanks{The first author is supported by the FWF (Austrian Science Fund) project I-608-N18.
The second author is supported by JSPS posdoctoral fellowships for young scientists.}
}

\institute{Institute of Computer Science, University of Innsbruck, Austria 
\\
\email{\{martin.avanzini,naohi.eguchi\}@uibk.ac.at}
%\and 
%Institute of Computer Science, University of Innsbruck, Austria
%\\
%\email{naohi.eguchi@uibk.ac.at}
}
%\date{March 17, 2010}

\begin{document}

\maketitle

\begin{abstract}
Adopting former term rewriting characterisations of polytime
and exponential-time computable functions,
we introduce a new reduction order, 
the {\em Path Order for ETIME} (\POESTAR\ for short), 
that is sound and complete for ETIME computable functions.
The proposed reduction order for ETIME makes contrasts to those related complexity classes clear.
\end{abstract}

\section{Introduction}

Function-algebraic approaches to computational complexity classes
without explicit bounding constraints have been developed, providing
successful characterisations of
various complexity classes of functions as the smallest classes containing
certain initial functions closed under specific operations.
Initially, S. Bellantoni and S. Cook introduced a
restrictive form of primitive recursion known as {\em safe recursion} \cite{BC92}, or
independently D. Leivant introduced {\em tiered recursion} \cite{Leivant95},
characterising the class of polynomial-time computable functions.
The idea of safe recursion is to separate the arguments of every function into two kinds (by semicolon) so that the number of recursive
calls is measured only by an argument occurring left to semicolon whereas
recursion terms are substituted only for arguments
occurring right:
%
%MA: dyadic notation
\begin{equation}
\tag{\textbf{Safe Recursion}}
  \begin{array}{rcl}
  f(\sn{\varepsilon, \vec y}{\vec z}) &=& g(\sn{\vec y}{\vec z}) \\
  f(\sn{x\cdot i, \vec y}{\vec z}) &=& 
  h(\sn{x, \vec y}{\vec z, f(\sn{x, \vec y}{\vec z})}) \qquad (i = 0,1)
  % f(\sn{0, \vec y}{\vec z}) &=& g(\sn{\vec y}{\vec z}) \\
  % f(\sn{x+1, \vec y}{\vec z}) &=& 
  % h(\sn{x, \vec y}{\vec z, f(\sn{x, \vec y}{\vec z})})
  \end{array}
\end{equation}
In contrast to classical approaches based on bounded recursion, the
function-algebraic characterisation by safe recursion enables us to
define every polytime function by a purely equational system, or in other
word by a term rewrite system.
Improving the function-algebraic characterisation by S.~Bellantoni and S.~Cook, 
together with G.~Moser the authors introduced the 
\emph{(small) polynomial path order} (\POPSTARS)~\cite{AEM12} that 
constitutes an \emph{order-theoretic} characterisation of the polytime functions. 
%MA: dropped ref POP*
% the first author and G. Moser introduced an order, the {\em polynomial path order} (\POPSTAR) \cite{AM08}, and afterwards
% a variant \POPSTARS\ of \POPSTAR\ together with the second author
% \cite{AEM12}, both of which characterise the class of polytime functions.
% MA: notes => work, to avoid confusion
In the present work, we introduce a syntactic extension of \POPSTARS, the {\em Path Order for ETIME} 
(\POESTAR\ for short). 
%MA: added def of ETIME
This order characterises the class of \emph{ETIME} computable functions, i.e., 
functions computable in deterministic time $2^{\bigO(n)}$.

\section{Function-algebraic Backgrounds}
Various function-algebraic characterisations of the ETIME functions are known,
e.g.~\cite{Monien77,clote97}.
It is also known that extension of safe recursion to (multiple) nested
recursion, called {\em safe nested recursion}, captures the class of exponential-time computable 
functions~\cite{AE09}.
Improving the function-algebraic characterisation by safe nested recursion, the authors together with G.~Moser have
introduced an order, the \emph{Exponential Path Order}~(\EPOSTAR), that is sound and
complete for the exponential-time functions. 
%MA: sentence added
The order proposed here is a syntactic restriction of $\EPOSTAR$.
%MA: related work below
% We consider of ETIME functions without constraints on the growth rate,
% i.e., possibly exponential functions computable by 
% deterministic Turing machines in steps bounded by $2^{\bigO(|m|)}$ for the
% binary length $|m| = \lceil \log_2 (m+1) \rceil$ of the maximal
% $m$ among inputs.

It turns out that the following form of safe nested recursion with single recursion arguments
is sound for ETIME functions.
\begin{equation*}
%MA: dyadic notation; added nesting of f
%\tag{\textbf{Safe Recursion}}
  \begin{array}{rcl}
  f(\sn{\varepsilon, \vec y}{\vec z}) &=& g(\sn{\vec y}{\vec z}) \\
  f(\sn{x \cdot i, \vec y}{\vec z}) &=& 
  h(\sn{x, \vec y}%
       {\vec z, f(\sn{x,\vec y}{\vec{h'} (\sn{x, \vec y}{\vec z, f(\sn{x, \vec y}{\vec z})})})})  
       \qquad (i = 0,1)
  % f(\sn{0, \vec y}{\vec z}) &=& g(\sn{\vec y}{\vec z}) \\
  % f(\sn{x+1, \vec y}{\vec z}) &=& 
  % h(\sn{x, \vec y}%
  %      {\vec{h'} (\sn{x, \vec y}%
  %                  {\vec z, f(\sn{x, \vec y}{\vec z})})
  %      })
  \end{array}
\end{equation*}
%
%MA added
The definition of \POESTAR\ essentially encodes this recursion scheme.  
In contrast to related work, this scheme does neither rely on \emph{bounded functions} 
\cite{Monien77} and allows the definition of functions that grow faster than 
a linear polynomial~\cite{clote97}.

%MA: explained above
% The proposed path order \POESTAR\ for ETIME is defined based on this observation as a strictly
% intermediate order that lies between \POPSTARS\ and \EPOSTAR.

%%% Local Variables: 
%%% mode: latex
%%% TeX-master: "paper"
%%% End: 

%\input{rec}
\section{The Path Order for ETIME (\POESTAR)}
We assume at least nodding acquaintance with the basics of term rewriting~\cite{baader}.
For an order $>$, we denote by $\prodext{>}$ the \emph{product extension} of $>$
defined by $\tup[k]{a} \prodext{>} \tup[k]{b}$ if $a_i = b_i$ or $a_i > b_i$ for all 
$i = 1,\dots,k$, and there exists at least one $j \in \{1,\dots,k\}$ such that $a_j > b_j$ holds. 

We fix a countably infinite set of \emph{variables} $\VS$
and a finite set of \emph{function symbols} $\FS$, the \emph{signature}.
The set of terms formed from $\FS$ and $\VS$ is denoted by $\TERMS$.
The signature $\FS$ contains a distinguished set of 
\emph{constructors} $\CS$, elements of $\Val$ are called \emph{values}. 
Elements of $\FS$ that are not constructors are called \emph{defined symbols}
and collected in $\DS$.
We use always $v$ to denote values, and arbitrary terms are denoted by $l,r$ and $s,t,\dots$, possibly followed by subscripts. 
A \emph{substitution} $\sigma$ is a finite mapping from variables to terms, 
its homomorphic extension to terms is also denoted by $\sigma$ and we write $t\sigma$ 
instead of $\sigma(t)$. 

A \emph{term rewrite system} (\emph{TRS} for short) $\RS$ (over $\FS$) is a finite set of rewrite rules $f(\seq{l}) \to r$, 
where all variable in the term $r$ occur in the term $f(\seq{l})$
and $f \in \DS$.
Adopting \emph{call-by-value} semantics, we define the 
\emph{rewrite relation} $\rew[\RS]$ by
\begin{equation*}
  (\text{i})~\frac{f(l_1,\dots,l_n) \to r \in \RS,~\ofdom{\sigma}{\VS \to \Val}}
       {f(l_1\sigma,\dots,l_n\sigma) \rew r\sigma}
\quad
  (\text{ii})~\frac{s \rew t}
       {f(\dots,s,\dots) \rew f(\dots,t,\dots)}  \tpkt
\end{equation*}

Throughout the present notes we only consider \emph{completely defined},%
\footnote{The restriction is not necessary, but simplifies our presentation.}
\emph{orthogonal constructor}~TRSs~\cite{baader}, that is, 
for each application of $(i)$ there is exactly one matching rule $f(l_1,\dots,l_n) \to r \in \RS$; 
the terms $l_i$ ($i = 1,\dots,n$) contains no defined symbols and 
variables occur only once in $f(l_1,\dots,l_n)$. 

For each defined symbol $f$ of arity $k$, $\RS$ defines a
function $\ofdom{\sem{f}}{\Val^k \to \Val}$ by
$\sem{f}(\seq[k]{v}) \defsym v$ iff $f(\seq[k]{v}) \rew[\RS] \cdots \rew[\RS] v$. 
These functions are well-defined if $\RS$ \emph{terminating}, i.e., when $\rew[\RS]$ is well-founded.
We do not presuppose that $\RS$ is terminating, instead, our method implies termination. 

For a term $t$, the \emph{size} of $t$ is denoted as $\size{t}$ referring to the number of symbols occurring in $t$.  
%the \emph{depth} $\depth(t)$ of $t$ is given by the length of the longest path in $t$ when conceived as a tree.
For a complexity measure for TRSs, the \emph{(innermost) runtime complexity function} $\ofdom{\rc[\RS]}{\N \to \N}$ is defined by
$$
 \rc[\RS](n) \defsym 
 \max\{\ell \mid \exists s = f(\seq{v}), \size{s} \leqslant n \text{ and } s  = t_0 \rew t_1 \rew \dots \rew t_\ell\} 
 \tkom
$$
which is well-defined for terminating TRSs $\RS$. 
The runtime-complexity function constitutes an \emph{invariant cost-model} for rewrite systems:
the functions $\sem{f}$ ($f \in \DS$) can be computed within polynomial overhead
on conventional models of computation, e.g., on Turing machines~\cite{LM09,AM10}.

Let $\sp$ denote a strict order on $\FS$, the \emph{precedence}. 
We assume that the argument positions of every function symbol
are separated into two kinds.
The separation is denoted by semicolon as
$f(\sn{t_1, \dots, t_k}{t_{k+1}, \dots, t_{k+l}})$,
where $t_1, \dots, t_k$ are called {\em normal} arguments whereas
$t_{k+1}, \dots, t_{k+l}$ are called {\em safe} ones.
For constructors $\CS$, we suppose that all symbols are safe. 
We write $s \nsupertermstrict t$ if $t$ is a \emph{sub-term of 
a normal argument} of $s$, i.e., $s = f(\pseq[k][l]{s})$ and 
$t$ occurs in a term $s_i$ for $i \in \{1,\dots,k\}$. 
The following definition introduces the 
instance $\poe$ of the $\POESTAR$ as induced by a precedence $\sp$. 
\begin{definition}
\label{d:poe}
  Let $\sp$ be a precedence and let $s = f(\sn{s_1, \dots, s_k}{s_{k+1}, \dots, s_{k+l}})$.
  Then $s \poe t$ 
  if one of the following alternatives holds.
  \begin{enumerate}
  \item \label{d:poe:st}
    $s_i \eqpoe t$ for some 
    $i \in \{1, \dots, k+l \}$.
  \item \label{d:poe:ia}
    $f \in \DS$ and 
    $t = g(\sn{t_1, \dots, t_m}{t_{m+1}, \dots, t_{m+n}})$ 
    % for some $g \in \FS$ and for some
    % $t_1, \dots, t_{m+n} \in \TERMS$ 
    % such that
    with
    $f \sp g$ and:
    \begin{itemize}
    \item $s \nsupertermstrict t_j$ for all  
      $j \in \{ 1, \dots, m \}$;
      \label{d:poe:ia:1}
    \item $s \poe t_j$ for all $j \in \{ m+1, \dots, m+n \}$;
    \label{d:poe:ia:2}
    \end{itemize}
  \item $f \in \DS$ and 
        $t = f(\sn{t_1, \dots, t_k}{t_{k+1}, \dots, t_{k+l}})$ 
        % for some $g \in \FS$ and for some
        % $t_1, \dots, t_{k+l} \in \TERMS$ such that 
        and:
  \label{d:poe:ts}
    \begin{itemize}
    \item $\tuple{s_1,\dots,s_k} \prodext{\poe} \tuple{t_{1},\dots,t_{k}}$ 
    \item $s \poe t_j$ for all $j \in \{ k+1, \dots, k+l \}$. 
    \end{itemize}
  \end{enumerate}
\end{definition}

We say that $\RS$ is \emph{\poecompatible} if 
for some precedence $\sp$, 
$l \poe r$ holds for all rules $l \to r \in \RS$. 
%\begin{todos}
%  \item examples
%\end{todos}

\begin{example}
The standard addition $(x, y) \mapsto x+y$ (in unary notation) is defined by a TRS
$\RS_{\m{add}}$ consisting of the following two rules.
\label{ex:add}
  \begin{alignat*}{4}
    \rlbl{1}: &~& \m{add}(\sn{\mZ}{y}) & \to y  \qquad\qquad &
    \rlbl{2}: &~& \m{add}(\sn{\ms(\sn{}{x})}{y}) & 
    \to \ms (\sn{}{\m{add} (\sn{x}y)})
  \end{alignat*}
Define a precedence by $\m{add} \sp \ms$ and an argument separation as
indicated in the rules.
Then it can be seen that 
$\m{add}(\sn{\mZ}{y}) \poe y$
and
$\m{add}(\sn{\ms(\sn{}{x})}{y}) \poe \ms (\sn{}{\m{add} (\sn{x}y)})$
hold for the order $\poe$ induced by the precedence $\sp$.
\end{example}

\begin{example}
An exponential $2^{x} + y$ is defined by a TRS $\RS_{\m{exp}}$ consisting of the following two rules. 
\label{ex:exp}
  \begin{alignat*}{4}
    \rlbl{1}: &~& \m{exp}(\sn{\mZ}{y}) & 
    \to \ms (\sn{}{y})  \qquad\qquad &
    \rlbl{2}: &~& \m{exp}(\sn{\ms(\sn{}{x})}{y}) & 
    \to \m{exp} (\sn{x}{\m{exp} (\sn{x}{y})})
  \end{alignat*}
The TRS $\RS_{\m{exp}}$ is compatible with the order $\poe$ induced
by the precedence $\m{exp} \sp \ms$.
\end{example}

\begin{example}
A factorial function of the form $y \cdot x ! + z$ is defined by a 
TRS $\RS_{\m{fac}}$ consisting of $\RS_{\m{add}}$ and additionally of
 the following three rules.
\label{ex:fac}
  \begin{alignat*}{4}
    \rlbl{3}: &~& \m{fac} (\sn{\mZ, y}{z}) & 
    \to \m{add} (\sn{y}{z}) &
    \rlbl{4}: &~& \m{fac} (\sn{\ms(\sn{}{x}), \mZ}{z}) &  
    \to z
    \\
    \rlbl{5}: &~& \m{fac}(\sn{\ms (\sn{}{x}), \ms (\sn{}{y})}{z}) & 
    \to \m{fac} (\sn{\ms (\sn{}{x}), y}%
                    {\m{fac} (\sn{x, \ms (\sn{}{x})}{z})
                                 })
 & &
  \end{alignat*}
The TRS $\RS_{\m{fac}}$ is not compatible with any \POESTAR.
In particular, rule 5 is not orientable since element-wise
 comparison of
$\tuple{\ms (\sn{}{x}), \ms (\sn{}{y})}$ 
and
$\tuple{x, \ms (\sn{}{x})}$
fails.
\end{example}
Note that function $\sem{\m{fac}}$ is computable in exponential-time, 
but not in ETIME. 
%MA: added. 
% By the following Theorem it is not surprising that $\RS_{\m{fac}}$ is not 
% \poecompatible. 

\begin{theorem}[Soundness of \POESTAR for ETIME]
\label{t:sound}
Every function defined by a \poecompatible\ rewrite system is ETIME computable.
\end{theorem}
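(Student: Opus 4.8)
The plan is to reduce the statement to a single-exponential bound on the innermost runtime complexity, and then to establish that bound by a quantitative embedding of $\poe$. First I would settle well-definedness: since $\poe$ is a syntactic restriction of $\EPOSTAR$ and the latter is well-founded, $\poe$ is well-founded, so $\RS$ being \poecompatible\ yields termination and every $\sem{f}$ is a total function. By the invariant cost-model results cited above~\cite{LM09,AM10}, the value $\sem{f}(v_1,\dots,v_k)$ is computable on a Turing machine in time polynomial in $\size{f(v_1,\dots,v_k)} + \rc[\RS](\size{f(v_1,\dots,v_k)})$. Since $\mathrm{poly}(2^{\bigO(n)}) = 2^{\bigO(n)}$, it therefore suffices to prove $\rc[\RS](n) = 2^{\bigO(n)}$, i.e. that every innermost derivation starting from a term $f(v_1,\dots,v_k)$ of size $n$ has length $2^{\bigO(n)}$.

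To bound derivation length I would follow the template developed for $\POPSTARS$ and $\EPOSTAR$: assign to each term a quantitative measure that separates the \emph{normal} arguments, which drive the recursion, from the \emph{safe} ones, which merely transport already-computed data. The key measure is a recursion-depth $D(s)$ read off the normal arguments of $s = f(\sn{s_1,\dots,s_k}{s_{k+1},\dots,s_{k+l}})$, namely the length of the longest $\prodext{\poe}$-descending chain below $\tuple{s_1,\dots,s_k}$. Two facts drawn from Definition~\ref{d:poe} make $D$ well behaved: in case~\ref{d:poe:ia} the strict drop $f \sp g$ together with $s \nsupertermstrict t_j$ forces every normal argument of the callee to be a proper subterm of a normal argument of $s$; in case~\ref{d:poe:ts} the comparison $\tuple{s_1,\dots,s_k} \prodext{\poe} \tuple{t_1,\dots,t_k}$ forces every normal argument to weakly decrease with at least one strict decrease. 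Because product extension forces all components to decrease simultaneously, any such chain has length at most $\sum_{i=1}^{k}\size{s_i} \le n$, so $D(s) = \bigO(n)$.

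The core lemma would then bound the number of rewrite steps in any innermost derivation from $s$ by $b^{D(s)}\cdot p(\size{s})$, where $b$ is a branching constant bounded by the maximal number of recursive calls in a right-hand side of $\RS$ and $p$ is a fixed polynomial. I would prove it by a double induction: an outer induction on the precedence $\sp$ of the root of $s$, and, for a fixed root, an inner induction on $D(s)$. A call produced by case~\ref{d:poe:ia} is to a $\sp$-smaller symbol and is discharged by the outer induction; a recursive call produced by case~\ref{d:poe:ts} falls under a strictly smaller value of $D$ and is discharged by the inner induction; the subterm case~\ref{d:poe:st} is absorbed by the surrounding context, as under innermost evaluation the arguments are already values. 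This yields a recurrence of shape $T(d) \le b\,T(d-1) + p(\size{s})$, whose solution is $b^{\bigO(d)}\cdot p(\size{s}) = 2^{\bigO(d)}\cdot p(\size{s})$. Since $d = D(s) = \bigO(n)$ and $b$ is constant, the total length is $2^{\bigO(n)}$, as required.

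The main obstacle — and exactly the point where $\POESTAR$ must stay strictly below $\EPOSTAR$ — is twofold. First, in case~\ref{d:poe:ts} the safe arguments $t_{k+1},\dots,t_{k+l}$ of the recursive $f$-call may themselves contain further $f$-applications (the safe-nested-recursion pattern the order encodes), and I must verify that such a call nested inside a safe position is still governed by a \emph{strictly smaller} value of $D$, so that the inner induction applies and the branching stays bounded by the constant $b$ per level. Second, and more fundamentally, I must exploit that case~\ref{d:poe:ts} uses the \emph{product} extension $\prodext{\poe}$, which forces \emph{all} normal arguments to shrink simultaneously; this caps every chain of recursive calls at the total size of the normal arguments, hence at depth $\bigO(n)$. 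Were the comparison lexicographic, the recursion depth could grow to $\mathrm{poly}(n)$ and the bound would degrade to $2^{\mathrm{poly}(n)}$. This is precisely what $\RS_{\m{fac}}$ witnesses: its rule~5 fails the product comparison of $\tuple{\ms(\sn{}{x}),\ms(\sn{}{y})}$ with $\tuple{x,\ms(\sn{}{x})}$, and $\sem{\m{fac}}$ is exponential-time but not in ETIME. Turning this depth bound into the clean recurrence above, while correctly accounting under substitution for the growth of the safe arguments, is the technically delicate part of the argument.
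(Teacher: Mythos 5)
Your proposal is correct in substance and arrives at the right bound, but it takes a genuinely different route from the paper. The paper does not analyse the call structure directly: it defines an auxiliary finitely branching order $\poel[\ell]$ on sequences (a restriction of the multiset path order with $\listsym$ minimal), maps every term to a sequence of normalised terms via the predicative interpretation $\pint$ that strips off safe arguments, proves that each $\rew[\RS]$-step (root steps via Lemma~\ref{l:embed:root}, non-root steps via the closure property of Lemma~\eref{l:poel:approx}{seq} on the invariant set $\Tn$) descends in $\poel[\ell]$ for $\ell$ the maximal right-hand-side size, and then bounds the descent-length function $\Slow[\ell]$ combinatorially (Lemmas~\ref{l:slowsum} and~\ref{l:poel}), obtaining $\rc[\RS](n) \in 2^{\bigO(n)}$ and concluding by the invariance results~\cite{LM09,AM10} exactly as you do. Your call-tree argument replaces this embedding machinery by a direct double induction on (precedence rank, depth measure $D$), and it rests on the same two structural facts the paper isolates: normal argument positions only ever hold values during call-by-value evaluation (your implicit use of this is precisely the paper's $\Tn$ and Lemma~\ref{l:tnderiv}; note it needs the observation that on constructor terms $\poe$ collapses to the subterm relation, because clauses~\ref{d:poe:ia} and~\ref{d:poe:ts} of Definition~\ref{d:poe} require $f \in \DS$ --- worth stating explicitly to justify $D(s) = \bigO(n)$), and the product extension in case~\ref{d:poe:ts} caps recursion depth linearly, which is exactly where $\POESTAR$ falls below $\EPOSTAR$. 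What each approach buys: yours is more elementary and makes the quantitative mechanism transparent; the paper's embedding modularises the argument, separating the rewriting-specific part (the embedding lemmas) from the purely order-theoretic counting (the $\Slow$ bounds), and reuses the established $\EPOSTAR$ methodology~\cite{AEM11}. One imprecision you should repair: the recurrence $T(d) \leqslant b\,T(d-1) + p(\size{s})$ under-counts the calls to $\sp$-smaller symbols, whose cost is discharged by the \emph{outer} induction and is itself exponential, not polynomial; the correct shape is $T_r(d) \leqslant 1 + b \cdot \max\{T_{r-1}(\bigO(n)),\, T_r(d-1)\}$, which still solves to $2^{\bigO(n)}$ because the precedence on the finite signature has constant height, but as written your additive term is wrong.
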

This theorem follows from the following key lemma, whose proof is involved and hence postponed to the next section.

\begin{lemma}\label{l:poe}
For any \poecompatible\ rewrite system $\RS$, $\rc[\RS](n) \in 2^{\bigO(n)}$ holds. 
\end{lemma}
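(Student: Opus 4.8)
The plan is to bound the runtime-complexity function directly, since $\rc[\RS](n) \in 2^{\bigO(n)}$ together with the cited invariant cost-model (polynomial overhead on Turing machines) immediately yields Theorem~\ref{t:sound}. The first step is to observe that $\poe$ behaves as a rewrite preorder on the terms reachable under call-by-value: it enjoys the subterm property (alternative~\ref{d:poe:st}) and is readily seen to be closed under the formation of contexts and under substitution of values. Consequently, $\poe$-compatibility gives $\rew[\RS] \subseteq \poe$, so that every derivation $t_0 \rew t_1 \rew \cdots \rew t_\ell$ starting from a basic term $s = f(\seq{v})$ induces a strictly descending $\poe$-chain of the same length. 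It then suffices to bound the length of any such chain by $2^{\bigO(\size{s})}$.

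To count $\poe$-steps I would follow the predicative-interpretation method underlying \POPSTARS\ and \EPOSTAR. Concretely, I would design an interpretation $I(\cdot)$ that maps each term into a finite combinatorial structure---a sequence of terms in which the contributions of normal and of safe argument positions are kept separate, mirroring the alternatives~\ref{d:poe:ia} and~\ref{d:poe:ts} of Definition~\ref{d:poe}. The technical heart is a \emph{predicative embedding lemma}: whenever $s \poe t$, the interpretation strictly decreases, $I(s) \sqsupset I(t)$, for a suitable well-founded auxiliary order $\sqsupset$ on this structure. Alternatives~\ref{d:poe:st} and~\ref{d:poe:ia} are handled by the subterm property and by induction on the precedence $\sp$, since there $t$ is built from normal arguments that are \emph{strict} subterms of normal arguments of $s$ (the condition $s \nsupertermstrict t_j$) together with safe arguments already $\poe$-below $s$. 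The decisive case is the recursive one~\ref{d:poe:ts}, where $f$ is preserved, the normal tuple strictly decreases in $\prodext{\poe}$, and each safe argument is again $\poe$-below $s$.

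Finally, I would convert the embedding into the quantitative bound by estimating the length of $\sqsupset$-descending chains issuing from $I(s)$ with $\size{s} \le n$. The argument proceeds by induction along the precedence: for a fixed defined symbol the normal tuple can descend under $\prodext{\poe}$ only finitely often, and along each such descent every recursive call may be \emph{duplicated} through nesting in the safe positions---exactly as in $\RS_{\m{exp}}$, where $\m{exp}(\sn{\ms(\sn{}{x})}{y})$ unfolds into two copies of $\m{exp}(\sn{x}{\cdot})$. Counting this doubling shows that the number of $\poe$-steps charged to each symbol is at most $2^{\bigO(m)}$, where $m$ is the size of its normal part, and summing over the finitely many symbols yields $2^{\bigO(n)}$. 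The structural reason the tower stays at a \emph{single} exponential is that $\prodext{\poe}$ is imposed on the whole normal tuple at once; this is precisely the restriction that rejects $\RS_{\m{fac}}$ (Example~\ref{ex:fac}), whose nested recursion over two simultaneously decreasing normal arguments would otherwise generate super-exponentially many calls.

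The main obstacle I anticipate is the simultaneous satisfaction of two competing demands in alternative~\ref{d:poe:ts}: the interpretation must strictly decrease under $\sqsupset$ while its \emph{magnitude} must not grow faster than a single exponential, even though the safe positions may carry deeply nested recursive calls that the order allows to be duplicated. Getting the bookkeeping right---so that the duplication inherent in nested recursion is charged against the strictly decreasing single normal recursion argument, rather than against the safe data that is merely copied---is where the care is needed and where the single-recursion-argument shape of the admissible scheme must be exploited.
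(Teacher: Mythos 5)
Your opening step contains a genuine and fatal gap: the claim that $\poe$ is closed under contexts, so that $\poe$-compatibility yields ${\rew[\RS]} \subseteq {\poe}$ and every derivation is a $\poe$-descending chain. This is false for this order. Consider a rewrite step inside a \emph{safe} argument, $s = f(\sn{\seq[k]{v}}{\dots,s_i,\dots}) \rew[\RS] f(\sn{\seq[k]{v}}{\dots,t_i,\dots}) = t$ with $s_i \rew[\RS] t_i$. None of the three alternatives of Definition~\ref{d:poe} orients the pair $(s,t)$: alternative~\ref{d:poe:ts} requires $\tuple{s_1,\dots,s_k} \prodext{\poe} \tuple{t_1,\dots,t_k}$, and the product extension demands at least one \emph{strictly} decreasing component, whereas the normal arguments are unchanged; alternative~\ref{d:poe:ia} would need $f \sp f$; and alternative~\ref{d:poe:st} would need some argument of $s$ to dominate all of $t$. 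So rewrite sequences do not induce $\poe$-chains, and the entire quantitative analysis you build afterwards has nothing to attach to. This failure of context closure is intrinsic to predicative path orders (it already occurs for \POPSTARS\ and \EPOSTAR), and it is precisely why the paper never embeds $\rew[\RS]$ into $\poe$ itself.

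What the paper does instead---and what your second paragraph gropes towards without resolving the problem---is to embed the rewrite relation into an \emph{auxiliary} order $\poel[\ell]$ on sequences of normalised terms, via the predicative interpretation $\pint$ that flattens the safe arguments of a term into a sequence alongside the normalised root $\fn(\seq[k]{t})$. Compatibility $l \poe r$ enters only at root steps (Lemma~\ref{l:embed:root}, which moreover crucially exploits that left-hand sides are constructor terms, so $\poe$-decreases between normal arguments collapse to strict superterm steps under value substitutions---your proposed embedding of $\poe$ into the interpretation order for \emph{arbitrary} terms would not go through); closure under contexts is recovered at the level of $\poel$ and sequence concatenation (Lemma~\ref{l:poel:approx}), with the set $\Tn$ ensuring rewriting never occurs below a normal argument position. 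Your final counting sketch (duplication charged against the strictly decreasing normal tuple, yielding a single exponential) is in the right spirit and corresponds to the paper's Lemma~\ref{l:poel}, which bounds $\Slow(f(\seq[n]{t}))$ by $(\ell+1)^{(\ell+1)^{\rk(f)} \cdot \sum_{i=1}^{n}\Slow(t_i)}$ by induction on $\Slow$; but as written, your proof cannot start, because its first reduction---from derivation length to $\poe$-chain length---does not hold.
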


%\begin{theorem}\label{t:E}
%  The following class of functions are equivalent:
%  \begin{enumerate}
%  \item\label{t:E:1} The class of functions computed \poecompatible\ TRSs.
%  \item\label{t:E:3} The class ETIME of functions computable in time $2^{\bigO(n)}$. 
%  \end{enumerate}
%\end{theorem}

%MA: moved after soundness
Although the inverse of Lemma~\ref{l:poe} is in general not true, 
the order is also extensionally complete for the ETIME functions. 
\begin{theorem}[Completeness of \POESTAR\ for ETIME]
\label{t:comp}
Every ETIME function can be defined by a \poecompatible\ rewrite system.
\end{theorem}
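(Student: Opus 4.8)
The plan is to prove completeness by exhibiting, for every ETIME function, a \poecompatible\ rewrite system computing it. Since ETIME is characterised by the restricted form of safe nested recursion displayed earlier (with single recursion arguments), the natural strategy is to show that this function algebra embeds into \poecompatible\ TRSs: I would proceed by induction on the definition of an ETIME function $f$ in that algebra, constructing a defining TRS $\RS_f$ and simultaneously verifying $\poe$-compatibility. First I would treat the base functions (projections, constructors, the successor-style initial functions) by trivial one-rule systems, oriented by clause~\ref{d:poe:st} or a minimal precedence. For composition $f(\sn{\vec{y}}{\vec{z}}) = h(\sn{\dots}{\dots})$, I would set $h$ above the composed symbols in the precedence and check that normal arguments go into normal positions and safe into safe, so that clause~\ref{d:poe:ia} applies with the side conditions on $\nsupertermstrict$ and $\poe$.

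The core of the argument is the recursion case. Given the safe nested recursion scheme with single recursion argument, I would introduce a fresh defined symbol $f$ and write exactly the two rewrite rules from the scheme, placing $f$ above $g$, $h$, $h'$ and their constituent symbols in the precedence. The base rule $f(\sn{\varepsilon,\vec{y}}{\vec{z}}) \to g(\sn{\vec{y}}{\vec{z}})$ is oriented by clause~\ref{d:poe:ia}. For the step rule, the outermost comparison is $f(\sn{x\cdot i,\vec y}{\vec z}) \poe h(\sn{\dots}{\dots})$ via clause~\ref{d:poe:ia}, using $f \sp h$; I then have to verify recursively that each recursive call $f(\sn{x,\vec y}{\dots})$ satisfies $f(\sn{x\cdot i,\vec y}{\vec z}) \poe f(\sn{x,\vec y}{\dots})$, which is exactly clause~\ref{d:poe:ts}: the normal tuple $\tuple{x\cdot i,\vec y}$ dominates $\tuple{x,\vec y}$ in $\prodext{\poe}$ because $x \cdot i \poe x$ holds by clause~\ref{d:poe:st} (as $x$ is a subterm of $x\cdot i$), while the remaining normal arguments $\vec y$ are unchanged, and the safe arguments are checked by the inductive hypotheses on $h'$ and the inner $f$-call. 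The single-recursion-argument restriction is what makes the $\prodext{\poe}$ comparison succeed, mirroring why $\RS_{\m{fac}}$ in Example~\ref{ex:fac} fails.

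Since the restricted safe nested recursion scheme is only asserted to be \emph{sound} for ETIME in the excerpt, the completeness direction I actually need is that every ETIME function lies in the closure of the initial functions under this scheme (together with composition). I would either cite the known function-algebraic characterisations of ETIME via safe nested recursion~\cite{AE09,Monien77,clote97} and argue that the single-argument restriction still captures all of ETIME, or give a direct simulation: encode an ETIME Turing machine computation of length $2^{cn}$ by iterating a polytime-definable single-step transition function $2^{cn}$ times, where the $2^{cn}$-fold iteration is produced by the exponential recursion pattern of Example~\ref{ex:exp}. The polytime single-step part is definable and orientable by the underlying \POPSTARS, which is a subsystem of \POESTAR, so only the exponential iteration layer needs the full recursion clause.

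The main obstacle will be the second paragraph's recursion case together with the algebraic-closure claim: I must confirm that the single-recursion-argument form of safe nested recursion is genuinely expressive enough for all of ETIME (not merely sound, as stated), and simultaneously that the resulting rewrite rules fall within the syntactic shape permitted by Definition~\ref{d:poe}, in particular that the nested safe call $\vec{h'}(\sn{x,\vec y}{\vec z, f(\dots)})$ can be oriented by clause~\ref{d:poe:ia} with $s = f(\sn{x\cdot i,\vec y}{\vec z})$ on every safe argument. Establishing that the iteration-based simulation of ETIME machines respects the normal/safe separation---so that the step count sits in a normal position and the tape contents stay safe---is where the real work lies, and I expect the bookkeeping of argument separation to be the delicate part rather than any single deep insight.
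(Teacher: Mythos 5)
Your second alternative---direct simulation of an ETIME machine by iterating the transition function---is indeed the route the paper takes, but your sketch of it contains a genuine gap at its quantitative core. You assert that the $2^{cn}$-fold iteration ``is produced by the exponential recursion pattern of Example~\ref{ex:exp}''; however, that pattern applied to an input of size $n$ yields only $2^{n}$ iterations of the step function, not $2^{cn}$ for an arbitrary constant $c$. Nor can you repair this by first computing a padded input of length $cn$ and recursing on it: clause~\ref{d:poe:ia} of Definition~\ref{d:poe} requires every normal argument on the right-hand side to be a subterm of a normal argument of the left-hand side, so the output of a defined symbol can never flow into a normal (recursion-controlling) position. The paper closes exactly this gap with a cascade construction: $\m{f}_1$ iterates $\m{d}$ $2^{\size{w}}$ times; then $\m{f}_2$ takes \emph{two} normal copies of $w$, recurses in nested fashion on one and calls $\m{f}_1$ on the other at the base, giving $\sem{\m{f}_2}(\sn{w,w}{u}) = \sem{\m{d}}^{2^{2\cdot\size{w}}}(\sn{}{u})$, and after $k$ stages $\sem{\m{f}_k}(\sn{w,\dots,w}{u}) = \sem{\m{d}}^{2^{k\cdot\size{w}}}(\sn{}{u})$, all orientable under the precedence $\m{f}_k \sp \cdots \sp \m{f}_1 \sp \m{d}$. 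The idea of duplicating the input across several normal argument positions to multiply the exponent by a constant, while each individual rule still respects the product-extension comparison of clause~\ref{d:poe:ts}, is the missing ingredient in your proposal.

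Your primary route---induction over the function algebra of safe nested recursion with single recursion arguments---has the further problem that you yourself identify but do not resolve: the paper asserts only \emph{soundness} of that scheme for ETIME, and none of the cited characterisations supplies the needed closure claim ({\cite{AE09}} characterises full exponential time, while {\cite{Monien77}} relies on bounded functions and {\cite{clote97}} restricts growth to linear polynomials, limitations the paper explicitly distances itself from). Proving that the restricted algebra exhausts ETIME would itself require a machine simulation of essentially the above form, which is why the paper bypasses the algebra entirely and simulates Turing machines directly, substituting the pattern-matching-definable transition function for $\m{d}$. Your clause-by-clause verification of the orientation of the recursion rules (clause~\ref{d:poe:ia} for the outer call and the nested safe call, clause~\ref{d:poe:ts} for the recursive calls, with $x$ a safe subterm of $x \cdot i$) is correct and matches what the paper's rules require, but without the cascade the simulation only reaches time $2^{n}$, which falls short of ETIME.
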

\begin{proof}[Sketch]
  Consider words formed from dyadic successors $\m{0}$ and $\m{1}$ together with 
  a constant $\m{\epsilon}$, denoting the empty word.
  The following rewrite rules
  \[
  \m{f}_1 (\sn{\m{\epsilon}}{u}) \rightarrow \m{d}(\sn{}{u}) \qquad\quad
  \m{f}_1 (\sn{ \m{i}(\sn{}{x})}{u}) \rightarrow
  \m{f}_1 (\sn{x}{\m{f}_0(\sn{x}{u})}) 
  \quad (\text{for}~\m{i} \in \{\m{0},\m{1}\})
  \tkom
  \]
  define a function $\sem{\m{f}_1}(\sn{w}{c}) = \sem{\m{d}}^{2^{\size{w}}}(\sn{}{c})$, 
  i.e., $2^{\size{w}}$-fold iteration of $\sem{\m{d}}$. Here, we suppose 
  that $\size{w}$ counts the number of occurrences of $\m{0}$ and $\m{1}$ 
  in $w$. 
  Next consider the following rewrite rules.
  \[
  \m{f}_2 (\sn{\m{\epsilon}, y}{u}) \rightarrow 
  \m{f}_1 (\sn{y}{u}) \qquad
  \m{f}_2 (\sn{\m{i}(\sn{}{x}), y}{u}) \rightarrow
  \m{f}_2 (\sn{x,y}{\m{f}_2 (\sn{x,y}{u})})
  \quad (\text{for}~\m{i} \in \{\m{0},\m{1}\})
  \tpkt
  \]
  Then $\sem{\m{f}_2}(\sn{w,w}{u}) = \sem{\m{d}}^{2^{2 \cdot \size{w}}} (\sn{}{u})$.
  This construction can be extended to $k$ functions
  such that
  $\sem{\m{f}_k}(\sn{w,\dots,w}{u}) = \sem{\m{d}}^{2^{k \cdot \size{w}}} (\sn{}{u})$. 
  Note that all rules can be oriented by $\poe$, given 
  by the precedence $\m{f}_k \sp \cdots \sp \m{f}_1 \sp \m{d}$.

  Using this construction, it is possible to simulate ETIME Turing machine computations 
  by a \poecompatible\ TRS, essentially by substituting the transition 
  function for $\m{d}$. Note that $\m{d}$ can be defined by pattern matching only, 
  in particular it is easy to define $\m{d}$ such that the underlying rewrite rules are \poecompatible. 
\end{proof}

\begin{corollary}
The class of ETIME computable functions coincides with the class of
 functions computed by \poecompatible\ rewrite systems.
\end{corollary}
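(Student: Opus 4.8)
The plan is to read off the claimed coincidence as the conjunction of the two inclusions already supplied by Theorems~\ref{t:sound} and~\ref{t:comp}; the corollary asks for nothing beyond combining them. Write $\mathcal{F}$ for the class of functions $\sem{f}$ computed by \poecompatible\ rewrite systems and let $\mathcal{E}$ denote the class of ETIME computable functions. I would establish the two inclusions $\mathcal{F} \subseteq \mathcal{E}$ and $\mathcal{E} \subseteq \mathcal{F}$ in turn, after which equality is immediate.

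For $\mathcal{F} \subseteq \mathcal{E}$ I would invoke Theorem~\ref{t:sound} directly, recalling why it holds: a \poecompatible\ system is terminating, so each $\sem{f}$ is well-defined, and by the key Lemma~\ref{l:poe} its runtime complexity satisfies $\rc[\RS](n) \in 2^{\bigO(n)}$. Since the runtime complexity function is an invariant cost model — the $\sem{f}$ being computable within polynomial overhead on a Turing machine — a $2^{\bigO(n)}$ bound on the number of rewrite steps from a start term $f(\seq{v})$ translates into a $2^{\bigO(n)}$ time bound, so $\sem{f} \in \mathcal{E}$.

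For the reverse inclusion $\mathcal{E} \subseteq \mathcal{F}$ I would appeal to Theorem~\ref{t:comp}, whose sketch exhibits, for every ETIME function, a \poecompatible\ TRS computing it: the iteration schema $\sem{\m{f}_k}(\sn{w,\dots,w}{u}) = \sem{\m{d}}^{2^{k \cdot \size{w}}}(\sn{}{u})$ drives $2^{\bigO(\size{w})}$ iterations of a transition function substituted for $\m{d}$, which suffices to simulate an ETIME Turing machine. Combining the two inclusions gives $\mathcal{F} = \mathcal{E}$.

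I expect no real obstacle at the level of the corollary itself: it is pure bookkeeping over two previously established results. All the genuine difficulty sits one layer below, in the soundness direction and specifically in Lemma~\ref{l:poe}, whose proof is deferred; deriving the $2^{\bigO(n)}$ runtime bound from the compatibility condition alone is where the combinatorial work actually lives.
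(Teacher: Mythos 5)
Your proposal is correct and matches the paper exactly: the corollary is stated without a separate proof precisely because it is the immediate conjunction of Theorem~\ref{t:sound} (soundness, via Lemma~\ref{l:poe} and the invariant cost model) and Theorem~\ref{t:comp} (completeness, via the iteration schema simulating ETIME Turing machines), which is just the decomposition you give. Your added recollections of why each inclusion holds are accurate summaries of the paper's arguments, and your closing remark correctly locates the real work in Lemma~\ref{l:poe}.
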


%%% Local Variables: 
%%% mode: latex
%%% TeX-master: "paper"
%%% End: 

\section{Soundness Proof}
\label{s:sound}

In this section we prove Theorem~\ref{t:sound}.
The proof follows the pattern of the proof of 
soundness for the exponential path order~\cite{AEM11}.
We embed reductions of \poecompatible~rewrite systems 
into an auxiliary order $\poel$, whose length of 
maximal descending sequences we estimate appropriately
below. 

\subsection{Order on Sequences}
To formalise sequences of terms, 
we use an auxiliary variadic function symbol $\listsym$. 
Here variadic means that the arity of $\listsym$ 
is finite but arbitrary.  
Call a term $t \in \TA(\FS \cup \{\listsym\}, \VS)$ a \emph{sequence} if it 
is of the form $\listsym(\seq[k]{t})$ for $t_i \in \TERMS$ ($i = 1,\dots,k$). 
We always write $\lseq[k]{t}$ instead of $\listsym(\seq[k]{t})$.
We use $a,b,\dots$ to denote terms and sequences of terms.
We define \emph{concatenation} as $\lseq[k]{s} \append \lseq[l]{t} \defsym \lst{s_1~\cdots~s_k~t_1~\cdots~t_l}$, and 
extend it to terms by identifying terms $t$
with the singleton sequences $\lst{t}$, for instance $s \append t= \lst{s~t}$. 
For sequences $a$ define $\tolst(a) \defsym a$, and for terms $t$ define $\tolst(t) \defsym \lst{t}$.
\begin{definition}\label{d:poel} 
  Let $\spl$ denote a precedence on $\FS$.
  Let $\ell \in \N$ with $\ell \geqslant 1$.
  Then $a \poel b$ holds for terms or sequences of terms 
  $a,b$ if one of the following alternatives hold.
  \begin{enumerate}
  \item\label{d:poel:ia} 
    $a = f(\seq[k]{s})$, $b=g(\seq[l]{t})$ with $f \spl g$ and the following conditions hold:
    \begin{itemize}
    \item $f(\seq[k]{s}) \supertermstrict t_j$ for all $j = 1,\dots,l$; and
    \item $l \leqslant \ell$; or
    \end{itemize}
  \item\label{d:poel:ts}
    $a = f(\seq[k]{s})$, $b=f(\seq[k]{t})$ and $\tup[k]{s} \prodext{\supertermstrict} \tup[k]{t}$;
  \item\label{d:poel:ialst} 
    $a = f(\seq[k]{s})$, $b=\lseq[l]{t}$ and the following conditions hold:
    \begin{itemize}  
    \item $f(\seq[k]{s}) \poel t_{j}$ for all $j = 1,\dots,l$; and
    \item $l \leqslant \ell$.
    \end{itemize}
  \item\label{d:poel:ms} 
    $a = \lseq[k]{s}$, $b=\lseq[l]{t}$ and there exists terms \emph{or} sequences $b_i$ ($i = 1,\dots,k$) such that:
    \begin{itemize}
    \item $\lseq[l]{t} = b_1 \append \cdots \append b_k$; and
    \item $\tup[k]{s} \prodext{\poel} \tup[k]{b}$.
    \end{itemize}
  \end{enumerate}
\end{definition}

For notational convention, we will write $s \caseref{\poel}{i} t$ if 
$s \poel t$ follows from the \nth{$i$} clause in Definition~\ref{d:poel}. 
The following lemma collects frequently used properties of $\poel$.
\begin{lemma}\label{l:poel:approx}
  Let $\ell \geqslant 1$.
  The order $\poel[\ell]$ satisfies the following properties:
  \begin{enumerate}
  \item\label{l:poel:approx:inc} ${\poel} \subseteq {\poel[\ell+1]}$; and
  \item\label{l:poel:approx:seq} 
    if $a \poel b$ then ${c_1 \append a \append c_2} \poel {c_1 \append b \append c_2}$
    for all terms or sequences $a,b,c_1,c_2$.
  \end{enumerate}
\end{lemma}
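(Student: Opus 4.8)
The plan is to treat the two parts separately: the monotonicity in $\ell$ is a routine induction, while the compatibility with concatenation is obtained by repeated use of the sequence clause, alternative~\ref{d:poel:ms}. For the first part I would argue by induction on the derivation of $a \poel b$ that $a \poel[\ell+1] b$. The observation that drives everything is that $\ell$ enters Definition~\ref{d:poel} only through the side conditions $l \leqslant \ell$ of alternatives~\ref{d:poel:ia} and~\ref{d:poel:ialst}, which are preserved when $\ell$ grows, whereas every remaining condition is phrased either via the $\ell$-independent subterm relation $\supertermstrict$ (alternatives~\ref{d:poel:ia},~\ref{d:poel:ts}) or via recursive instances of $\poel$ that the induction hypothesis handles. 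Concretely, alternatives~\ref{d:poel:ia} and~\ref{d:poel:ts} carry no recursive $\poel$-premises, so the same witnesses work verbatim once $l \leqslant \ell \leqslant \ell+1$ is noted; for alternative~\ref{d:poel:ialst} the premises $f(\seq[k]{s}) \poel t_j$ become $\poel[\ell+1]$ by the induction hypothesis while $l \leqslant \ell+1$; and for alternative~\ref{d:poel:ms} the induction hypothesis upgrades each comparison in $\tup[k]{s} \prodext{\poel} \tup[k]{b}$ to $\poel[\ell+1]$, the splitting $\lseq[l]{t} = b_1 \append \cdots \append b_k$ being left untouched. This closes the induction.

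For the second part, fix $\ell \geqslant 1$ and reduce the two-sided statement to two one-sided appends, namely
\[
a \poel b \implies a \append c \poel b \append c
\quad\text{and}\quad
a \poel b \implies c \append a \poel c \append b \tpkt
\]
Granting these, the claim follows by first appending $\tolst(c_2)$ on the right and then $\tolst(c_1)$ on the left, using associativity of $\append$ and the fact that any concatenation is again a sequence.

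Both one-sided statements I would establish by a single application of alternative~\ref{d:poel:ms}. Take the right append and write $\tolst(c) = \lseq[q]{c}$. If $a$ is a term, then $a \append c = \lst{a~c_1~\cdots~c_q}$; I split the right-hand side $b \append c$ into the leading block $\tolst(b)$ followed by the singletons $c_1, \dots, c_q$, so that the induced product comparison reads $a \poel \tolst(b)$ in the first coordinate and the identity in the remaining $q$ coordinates. Here $a \poel \tolst(b)$ is just $a \poel b$ when $b$ is a sequence, and follows from $a \poel b$ by alternative~\ref{d:poel:ialst} with $l = 1 \leqslant \ell$ when $b$ is a term; either way alternative~\ref{d:poel:ms} yields $a \append c \poel b \append c$. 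If instead $a$ is a sequence $\lseq[k]{s}$, then $a \poel b$ can only be derived by alternative~\ref{d:poel:ms}, which already supplies a splitting $\tolst(b) = b_1 \append \cdots \append b_k$ with $\tup[k]{s} \prodext{\poel} \tup[k]{b}$; extending the left tuple by $c_1, \dots, c_q$ and the splitting by the singleton blocks $c_1, \dots, c_q$ produces a product comparison over $k + q$ coordinates, and alternative~\ref{d:poel:ms} applies once more. The left append is entirely symmetric.

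The main obstacle is organisational rather than conceptual: the systematic identification of a term $t$ with its singleton sequence $\tolst(t) = \lst{t}$. A comparison $a \poel b$ may be justified by a term-level alternative (\ref{d:poel:ia} or~\ref{d:poel:ts}), whereas alternative~\ref{d:poel:ms} demands sequence arguments split into possibly non-singleton, and occasionally empty, blocks. Making this precise means tracking which of $a, b, c_1, c_2$ are terms and which are sequences, and using alternative~\ref{d:poel:ialst} with $l \in \{0,1\}$ to pass between a term and its singleton, noting in particular that $f(\seq[k]{s}) \poel \lst{}$ holds vacuously. None of the resulting cases is deep, but there are several, and aligning the product-extension indices in alternative~\ref{d:poel:ms} is the only genuinely delicate point.
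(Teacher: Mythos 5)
Your proposal is correct and follows essentially the same route as the paper: both parts hinge on a single application of alternative~\ref{d:poel:ms} of Definition~\ref{d:poel}, with the elements of the context acting as identity blocks and the hypothesis $a \poel b$ supplying the strict coordinate of the product extension (the paper merely handles both sides at once rather than factoring into two one-sided appends, and disposes of part~\eqref{l:poel:approx:inc} with the same $\ell$-monotonicity observation you spell out). One small simplification you missed: since the blocks $b_i$ in alternative~\ref{d:poel:ms} may themselves be \emph{terms}, you can take $b$ itself as the distinguished block and use $a \poel b$ directly, so your detour through $\tolst(b)$ via alternative~\ref{d:poel:ialst} is unnecessary --- this is exactly how the paper's proof avoids the term/singleton bookkeeping you flag as the delicate point.
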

\begin{proof}
  Properties~\eqref{l:poel:approx:inc} follows by definition.
  To prove the second property, 
  suppose $a \poel b$ holds. Set $\lseq[k]{u} \defsym \tolst(c_1)$ and $\lseq[l]{v} \defsym \tolst(c_2)$, 
  and observe that by the overloading of $\append$ we have 
  \begin{align*}
    c_1 = u_1 \append \cdots \append u_k \quad \text{and} \quad c_2 = v_1 \append \cdots \append v_l \tpkt
  \end{align*}
  If $a = f(\seq[m]{s})$ is a term, 
  % Then 
  % \begin{align*}
  %   c_1 \append a \append c_2 & = u_1 \append \cdots \append u_k \append f(\seq[m]{s}) \append v_1 \append \cdots \append v_l \text{, and} \\
  %   c_1 \append b \append c_2 & = u_1 \append \cdots \append u_k \append b \append v_1 \append \cdots \append v_l
  % \end{align*}
  then by assumption $f(\seq[m]{s}) \poel b$ we have 
  $$
  \tuple{\seq[k]{u},f(\seq[m]{s}),\seq[l]{v}}
  \poel
  \tuple{\seq[k]{u},b,\seq[l]{v}}
  \tkom
  $$
  and thus
  $$
   c_1 \append f(\seq[m]{s}) \append c_2 \cpoel{ms} c_1 \append b \append c_2 \tkom
  $$
  holds as desired.
  Otherwise $a = \lseq[m]{s}$, 
  and the assumption can be strengthened to $a \cpoel{ms} b$. 
  By definition $b = b_1 \append \cdots \append b_m$ for some terms or sequences $b_j$ ($j = 1,\dots,m$) with 
  $\tup[m]{s} \poel \tup[m]{b}$.
  From this we obtain 
  $$
  \tuple{\seq[k]{u},\seq[m]{s},\seq[l]{v}}
  \poel
  \tuple{\seq[k]{u},\seq[m]{b},\seq[l]{v}}
  \tpkt
  $$
  Hence again the property follows by one application of $\cpoel{ms}$. 
  \qed
\end{proof}

Note that the order $\poel[\ell]$ is a restriction of the \emph{multiset path order}~\cite{baader} \scmt{really}
using $\listsym$ as a minimal element. The order is thus well-founded.
Since the indices $\ell$ ensures that $\poel[\ell]$ is finitely branching
the length of the maximal $\poel[\ell]$-descending sequence, expressed by the function $\Slow[\ell]$
is well-defined.

\begin{definition}\label{d:poel:slow}
  For $\ell \geqslant 1$, and terms or sequences $a$ define
  \[
  \Slow[\ell](a) \defsym \max \{~l~\mid \exists a_1,\dots,a_l.~ a \poel a_1 \poel \cdots \poel a_l \}
  \tpkt
  \]
\end{definition}

Note that if $a \poel b$ then $\Slow(a) > \Slow(b)$ holds. 
In the following, we prove that $\Slow(a)$ 
is bounded by an exponential in the depth of its argument. 
The following lemma serves as an auxiliary step.

\begin{lemma}\label{l:slowsum} 
  For all $\ell \geqslant 1$ and sequences $\lseq[k]{t}$, we have
  $\Slow(\lseq[k]{t}) = \sum_{i=1}^k \Slow(t_i)$. 
\end{lemma}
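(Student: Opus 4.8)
The plan is to prove the two inequalities $\Slow(\lseq[k]{t}) \geq \sum_{i=1}^{k}\Slow(t_i)$ and $\Slow(\lseq[k]{t}) \leq \sum_{i=1}^{k}\Slow(t_i)$ separately, keeping the parameter $\ell \geq 1$ fixed so that $\poel$ and $\Slow$ abbreviate $\poel[\ell]$ and $\Slow[\ell]$ throughout. The structural fact I would record first is that a sequence can be reduced only by clause~\eqref{d:poel:ms} of Definition~\ref{d:poel}: clauses \eqref{d:poel:ia}--\eqref{d:poel:ialst} all require the left-hand side to be a term headed by a symbol of $\FS$, whereas a sequence is headed by $\listsym$. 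Consequently every $\poel$-successor of a sequence is again a sequence, and every descending sequence issuing from $\lseq[k]{t}$ stays within sequences.

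For the lower bound I would fix, for each $i$, a maximal descending chain $t_i \poel c^{i}_1 \poel \cdots \poel c^{i}_{n_i}$ of length $n_i = \Slow(t_i)$, and splice these into the full sequence one component after another. Each single step $c^{i}_j \poel c^{i}_{j+1}$ lifts to a step on the whole sequence by the context-closure property, Lemma~\ref{l:poel:approx}\eqref{l:poel:approx:seq}, placing the already-processed components to its left and the untouched ones to its right. Carrying this out for $i = 1,\dots,k$ in turn produces a single descending sequence of length $\sum_{i=1}^{k} n_i = \sum_{i=1}^{k}\Slow(t_i)$, which settles $\geq$.

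For the upper bound I would introduce the potential $\Phi(\lst{c_1~\cdots~c_p}) \defsym \sum_{q=1}^{p}\Slow(c_q)$, noting that $\Phi(\lseq[k]{t})$ is exactly the claimed value, and prove $\Slow(a) = \Phi(a)$ for every sequence $a$ by Noetherian induction on the well-founded order $\poel$. The lower bound already yields $\Slow(a) \geq \Phi(a)$; for $\Slow(a) \leq \Phi(a)$ I would pick a $\poel$-maximal successor $b$ (the case without successors being trivial, as then both sides vanish), so that $\Slow(a) = 1 + \Slow(b)$, and rewrite $\Slow(b)$ as $\Phi(b)$ by the induction hypothesis; it then remains to prove $\Phi(b) < \Phi(a)$. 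Decomposing the step by clause~\eqref{d:poel:ms} as $a = \lseq[k]{s}$ and $b = b_1 \append \cdots \append b_k$ with $\tup[k]{s} \prodext{\poel} \tup[k]{b}$, the components with $s_i = b_i$ contribute identically, while for a strictly decreasing component $s_i \poel b_i$ it suffices to show $\Slow(s_i) > \Phi(\tolst(b_i))$. This in turn follows from $\Slow(s_i) > \Slow(b_i)$ (the remark after Definition~\ref{d:poel:slow}, applied to $s_i \poel b_i$) together with $\Slow(b_i) = \Phi(\tolst(b_i))$, which is immediate if $b_i$ is a term and an instance of the induction hypothesis if $b_i$ is a sequence.

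The main difficulty is precisely justifying this last appeal to the induction hypothesis, because the sub-sequence $b_i$ need not be a direct $\poel$-successor of $a$ and must first be shown to lie strictly below $a$ in the transitive closure of $\poel$. I would reach $b_i$ from $a$ in two $\poel$-steps: context closure (Lemma~\ref{l:poel:approx}\eqref{l:poel:approx:seq}) turns $s_i \poel b_i$ into $a \poel \lst{s_1~\cdots~s_{i-1}} \append b_i \append \lst{s_{i+1}~\cdots~s_k}$, and in this sequence every remaining element $s_j$ with $j \neq i$ can be erased, since clause~\eqref{d:poel:ialst} with $l = 0$ gives $s_j \poel \lst{}$; a second application of clause~\eqref{d:poel:ms} then collapses it to $\tolst(b_i)$ (for $k = 1$ no erasure is needed and $a \poel b_i$ already holds). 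Since the transitive closure of a well-founded relation is well-founded, the induction hypothesis is indeed available at $b_i$. Summing the component-wise estimates, strict in at least one component by the definition of $\prodext{\poel}$, yields $\Phi(a) > \Phi(b)$ and hence $\Slow(a) \leq \Phi(a)$, which together with the lower bound completes the induction and the proof.
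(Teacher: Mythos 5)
Your overall decomposition coincides with the paper's: the lower bound by splicing component chains via Lemma~\eref{l:poel:approx}{seq}, and the upper bound by observing that a sequence can only step by clause~\eqref{d:poel:ms} and then summing component-wise estimates, with the induction hypothesis invoked at the sub-sequences $b_i$. The difference is that you replace the paper's induction on the \emph{numeric value} $\Slow(\lseq[k]{t})$ by Noetherian induction on the transitive closure of $\poel$, and this substitution opens a genuine gap at exactly the step you yourself flag as the main difficulty. Your erasure step $s_j \poel \nil$ (clause~\eqref{d:poel:ialst} with $l = 0$) requires $s_j$ to be of the form $f(\seq[m]{u})$ with $f \in \FS$; but the lemma quantifies over \emph{all} sequences, whose components range over $\TERMS$ and may therefore be variables. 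A variable is $\poel$-minimal --- no clause of Definition~\ref{d:poel} applies to it on the left --- so in a clause-\eqref{d:poel:ms} step a variable component can neither be decreased nor erased, and every $\poel$-descendant of $a$ retains it as a component. Concretely, take nullary symbols with $\m{f} \spl \m{g}$ and $\ell \geqslant 2$: then $\m{f} \poel \lst{\m{g}~\m{g}}$, so $\lst{x~\m{f}} \cpoel{ms} \lst{x~\m{g}~\m{g}}$ with $b_2 = \lst{\m{g}~\m{g}}$, yet $\tolst(b_2)$ is \emph{not} reachable from $a = \lst{x~\m{f}}$ in the transitive closure of $\poel$, since every descendant of $a$ contains $x$. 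Hence the induction hypothesis you appeal to at $b_i$ is not available in general.

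The gap is local, and the repair is precisely the paper's device: induct on the natural number $\Slow(a)$ instead of on the order itself. Then no reachability claim is needed: for each sub-sequence $b_i$ one has $\Slow(b_i) \leqslant \Slow(b) < \Slow(a)$, the first inequality because any descending chain from $b_i$ lifts into the surrounding context of $b = b_1 \append \cdots \append b_k$ by the very Lemma~\eref{l:poel:approx}{seq} you already use for the lower bound, and the second because $a \poel b$. With that single change, your potential-function bookkeeping --- which is otherwise correct, including the identity $\Slow(a) = 1 + \Slow(b)$ for a $\Slow$-maximal successor $b$, legitimate since $\poel$ is finitely branching and well-founded, and the component-wise accounting of $\Phi$ --- goes through essentially verbatim. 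Alternatively, your proof stands as written if the lemma is restricted to sequences of non-variable terms, which is all that the soundness proof of Theorem~\ref{t:sound} ever uses, but that restriction is not part of the statement.
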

\begin{proof}
  Consider a sequence $\lseq[k]{t}$. 
  As a consequence of Lemma~\eref{l:poel:approx}{seq},
  $\Slow(a \append b) \geqslant \Slow(a) + \Slow(b)$, 
  holds for all sequences and terms $a,b$. 
  Hence in particular
  $\Slow(\lseq[k]{t}) =  \Slow(t_1 \append \cdots \append t_k) \geqslant \sum_{i=1}^k \Slow(t_i)$. 

  To show the inverse direction, we proceed by induction on $\Slow(\lseq[k]{t})$.
  The base case $\Slow(a) = 0$ follows trivially.
  For the induction step, we show that for all terms or sequences$b$, 
  $\lseq[k]{t} \poel b \IImp \Slow(b) < \sum_{i=1}^k \Slow(t_i)$
  holds.
  This implies
  $\Slow(\lseq[k]{t}) \leqslant \sum_{i=1}^k \Slow(t_i)$ as desired.
  Suppose $a \poel b$, which by definition of $\poel$ refines to $a \cpoel{ms} b$.
  Hence there exists terms or sequences $b_i$ ($i = 1,\dots,k$)
  such that $b = b_1 \append \cdots \append b_k$ and 
  $$
  \tup[k]{t} \poel \tup[k]{b} \tkom
  $$
  holds. 
  As a consequence,
  $\Slow(b_i) \leqslant \Slow(t_i)$ holds for all $i = 1,\dots,k$, 
  where for at least one $i_0 \in \{1,\dots,k\}$ we even have
  $\Slow(b_{i_0}) < \Slow(s_{i_0})$. 
  Using that  
  $$
  \Slow(b_i) \leqslant \Slow(b) < \Slow(\lseq[k]{t}) \quad \text{for all $i =1,\dots,k$,}
  $$
  induction hypothesis is applicable to $b$ and all $b_i$ ($i \in \{1,\dots,k\}$).
  Summing up we obtain
  \begin{align*}
    \Slow(b) = \sum_{s \in b} \Slow(s) = \sum_{i=1}^k \sum_{s \in b_i} \Slow(s) 
    = \sum_{i=1}^k \Slow(b_i) < \sum_{i=1}^k \Slow(t_i) \tpkt
  \end{align*}
  \qed
\end{proof}

\begin{lemma}\label{l:poel}
  Given a precedence $\spl$ on a signature $\FS$, 
  the {\em rank} $\rk: \FS \rightarrow \mathbb N$ is defined in
 accordance with $\spl$ as
  $\rk (f) > \rk (g) \Leftrightarrow f \spl g$.
  Let $\ell \geqslant 1$. 
  Then, for any function symbol $f \in \FS$ with arity 
  $n \leqslant \ell$ and for any 
  terms $\seq[n]{t}$, the following inequality holds.
  \begin{equation}
  \label{eq:poel:l}
  \Slow(f(\seq[n]{t})) \leqslant 
  (\ell +1)^{(\ell +1)^{\rk (f)} \cdot 
             \bigl( \textstyle\sum_{i=1}^n \Slow (t_i) \bigr)       
            }.
  \end{equation}
\end{lemma}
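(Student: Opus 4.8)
The plan is to prove the inequality by a nested induction. The primary induction is on the rank $r = \rk(f)$, and for a fixed rank a secondary induction runs on the argument tuple $\tup[n]{t}$ ordered by the product-subterm order $\prodext{\supertermstrict}$, which is well-founded. Throughout I abbreviate $N = \ell + 1$ and $S = \sum_{i=1}^n \Slow(t_i)$, and I exploit the elementary reformulation $\Slow(a) = \max\,\{\,1 + \Slow(b) \mid a \poel b\,\}$ (with $\Slow(a) = 0$ when $a$ has no $\poel$-successor). It therefore suffices to bound $1 + \Slow(b)$ by $N^{N^r \cdot S}$ for every one-step successor $b$ of $f(\seq[n]{t})$, and the three possible shapes of $b$ correspond exactly to clauses~(1)--(3) of Definition~\ref{d:poel}, which I treat separately.

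A recurring auxiliary ingredient is a bound on $\Slow$ of the \emph{strict subterms} of $f(\seq[n]{t})$: since in clause~(1) every argument $u_j$ of the successor is a strict subterm of $f(\seq[n]{t})$, hence a subterm of some $t_i$, I want $\Slow(u_j)$ to be controlled by $S$. I would first establish, by a separate induction on term structure, that $\Slow$ does not increase when passing to a subterm, so that $\Slow(u_j) \leqslant \Slow(t_i) \leqslant S$. Granting this, the clause-(1) case goes through cleanly: there $\rk(g) < r$ and $l \leqslant \ell$, so $\sum_j \Slow(u_j) \leqslant \ell\,S$, and the primary induction hypothesis applied to $g$ yields $\Slow(b) \leqslant N^{N^{\rk(g)} \cdot \ell\, S}$. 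The choice of base $N = \ell + 1$ is precisely what lets the spurious factor $\ell$ be absorbed by a single extra power of $N$: because $\rk(g) \leqslant r-1$ we have $N^{\rk(g)} \cdot \ell < N^{r}$, whence $\Slow(b) < N^{N^r \cdot S}$ as required.

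The clause-(2) successor $b = f(\seq[n]{t'})$ keeps the root $f$ but strictly decreases the arguments in $\prodext{\supertermstrict}$, so the secondary induction hypothesis applies directly; the subterm bound on $\Slow$ again guarantees $\sum_i \Slow(t'_i) \leqslant S$, keeping the estimate within $N^{N^r \cdot S}$. The genuinely hard case is the clause-(3) successor $b = \lseq[l]{u}$, where $f(\seq[n]{t}) \poel u_j$ for all $j$ and $l \leqslant \ell$. Here Lemma~\ref{l:slowsum} linearises the situation, $\Slow(b) = \sum_{j=1}^l \Slow(u_j)$, and each $u_j$ is itself a one-step successor of $f(\seq[n]{t})$, hence of one of the two term shapes already analysed (or a nested sequence, flattened again by Lemma~\ref{l:slowsum}). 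The branching into up to $\ell$ summands, each close in $\Slow$-value to $f(\seq[n]{t})$ itself, is the origin of the exponential growth, and the factor picked up here is exactly what the base $N$ is designed to carry.

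The main obstacle, and the part on which the whole estimate hinges, is making this compounding precise: a clause-(3) step multiplies the count by up to $\ell$, and such multiplications must be shown to compose into the exponent $N^r \cdot S$ rather than escaping it. I would organise the rank-preserving descent (clauses~(2) and~(3) at fixed root $f$) as a tree whose branching is bounded by $N$ and whose depth is paid for by the argument measure reflected in $S$, and then verify that the per-step factor $\ell$ is absorbed into the base $N = \ell + 1$ at each of the $r$ rank levels traversed by clause~(1). This bookkeeping — reconciling the multiplicative branching of clause~(3) with the additive structure of $S$, while keeping the subterm bound on $\Slow$ and the rank hypothesis in step — is where the delicacy lies, and it is precisely what forces the double-exponential shape $N^{N^r \cdot S}$ of the bound.
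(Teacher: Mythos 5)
Your skeleton is recognisably close to the paper's: the same clause-by-clause case analysis of Definition~\ref{d:poel}, the same absorption $(\ell+1)^{\rk(g)}\cdot\ell<(\ell+1)^{\rk(f)}$ in the clause~\eqref{d:poel:ia} case, and Lemma~\ref{l:slowsum} to linearise clause~\eqref{d:poel:ialst} successors. But there is a genuine gap, and it sits exactly where you concede ``the delicacy lies'': you never establish the \emph{strict} per-element estimate that makes the clause~\eqref{d:poel:ialst} summation go through. For $b=\lseq[k]{s}$ with $k\leqslant\ell$ one has $\Slow(b)=\sum_{i=1}^{k}\Slow(s_i)$, each $s_i$ being a clause~\eqref{d:poel:ia} or clause~\eqref{d:poel:ts} successor; to conclude $\Slow(b)<(\ell+1)^{(\ell+1)^{\rk(f)}\cdot S}$ you need every element to satisfy the strengthened bound $\Slow(s_i)\leqslant(\ell+1)^{(\ell+1)^{\rk(f)}\cdot S-1}$ --- the ``$-1$'' in the exponent is the whole point, since only then does $\ell\cdot(\ell+1)^{(\ell+1)^{\rk(f)}\cdot S-1}<(\ell+1)^{(\ell+1)^{\rk(f)}\cdot S}$ hold. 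This is precisely inequality~\eqref{eq2:t:poel} in the paper's proof. Your clause~\eqref{d:poel:ia} analysis does produce this slack (at least when $S\geqslant 1$; both your absorption step and the paper's inequality $m\cdot\max_i\Slow(t_i)<(\ell+1)\cdot\sum_i\Slow(t_i)$ silently require it), but your clause~\eqref{d:poel:ts} analysis does not: you assert only $\sum_i\Slow(t'_i)\leqslant S$, and with equality the inherited bound $(\ell+1)^{(\ell+1)^{\rk(f)}\cdot S}$ already saturates the target, so neither a direct clause~\eqref{d:poel:ts} step (which must cost at least one) nor the $\ell$-fold sum fits. The paper instead claims the strict decrease $\sum_i\Slow(t'_i)<\sum_i\Slow(t_i)$, and it organises the induction on the value $\Slow(f(\seq[n]{t}))$ itself rather than on rank plus $\prodext{\supertermstrict}$, which makes the induction hypothesis uniformly available for \emph{every} $\poel$-successor and replaces your deferred ``tree bookkeeping'' by a three-line computation.

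A second concrete problem: the auxiliary lemma you plan to prove by structural induction --- that $\Slow$ does not increase when passing to a subterm --- is false for arbitrary terms and precedences, so no such induction can succeed. Take $\ell=2$, constants $F\spl h\spl c$, and a unary symbol $g$ minimal in $\spl$. Then $F\poel\lst{h~h}$, $h\poel\lst{c~c}$ and $c\poel\nil$ yield $\Slow(c)=1$, $\Slow(h)=3$, $\Slow(F)=7$; yet $g(F)$ has no $\poel$-successor besides $\nil$ (clause~\eqref{d:poel:ia} is unavailable since $g$ is minimal, and $F$ has no strict subterms for clause~\eqref{d:poel:ts}), so $\Slow(g(F))=1<\Slow(F)$. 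To be fair, the paper leans on the same ``easy observation'' in its cases, so the comparison actually needed --- and what you should isolate and verify --- is restricted to the configurations arising in the proof, not the general subterm claim. As submitted, your proposal reproduces the routine cases but leaves the lemma's central estimate, the strict inequality~\eqref{eq2:t:poel} and its interplay with Lemma~\ref{l:slowsum} and $k\leqslant\ell$, unproved.
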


\begin{proof}
Let $t = f(t_1, \dots, t_n)$.
We prove the inequality (\ref{eq:poel:l})
by induction on 
$\Slow (t)$.
In the base case, $\Slow (t) =0$, and hence the inequality
 (\ref{eq:poel:l}) trivially holds.
In the case $\Slow (t) > 0$, it suffices to show that for any
$b \in  \TA(\FS \cup \{\listsym\}, \VS)$,
$t \poel b$ implies
$\Slow (b) <
  (\ell +1)^{(\ell +1)^{\rk (f)} \cdot 
             \bigl( \textstyle\sum_{i=1}^n \Slow (t_i) \bigr)       
            }
$.
The induction case splits into three cases 
$t \caseref{\poel}{i} b$ 
($i \in \{ \ref{d:poel:ia}, \ref{d:poel:ts}, \ref{d:poel:ialst}\}$). 
%(Definition \ref{d:epo}.\ref{d:epo:4} is not the case.)
For the sake of convenience, we start with the case 
$t \cpoel{ialst} b$.
Namely, we consider the case 
$b = \lseq[k]{s}$ where $1 \leqslant k \leqslant \ell$
    and $t \poel s_i$ for all $i \in \{1, \dots, k \}$.
We show that for all $i \in \{ 1, \dots, k \}$, 
\begin{equation}
 \Slow (s_i) \leqslant 
  (\ell +1)^{(\ell +1)^{\rk (f)} \cdot 
             \bigl( \textstyle\sum_{i=1}^k \Slow (t_i) 
             \bigr) -1       
            }
 \tpkt
\label{eq2:t:poel}
\end{equation}
We prove the inequality (\ref{eq2:t:poel}) by case analysis depending on
$j \in \{ \ref{d:poel:ia}, \ref{d:poel:ts} \}$ where 
$t \caseref{\poel}{j} s_i$ holds.
Fix some element 
$u \in \{ s_i \mid i\in \{1, \dots, k\} \}$.
\begin{enumerate}
\item \textsc{Case}. $t \cpoel{ia} u$: $u = g(\seq[m]{u})$ where $m \leqslant \ell$, 
  $g$ is a defined symbol with $f \spl g$ 
  and for all $i \in \{ 1, \dots, m\}$, $t$ is a strict super-term of $u_i$.
  We reason
  \begin{align*} 
  &&& 
  (\ell +1)^{\rk (g)} \cdot 
  \bigl( \textstyle\sum_{i=1}^m \Slow (u_i) \bigr) \\
  && \leqslant &
  (\ell +1)^{\rk (g)} \cdot 
  \bigl( m \cdot \max_{1 \leqslant i \leqslant n} \Slow (t_i) \bigr) \\
  && < &
  (\ell +1)^{\rk (g)} \cdot 
  \bigl( (\ell +1) \cdot \textstyle\sum_{i=1}^n \Slow (t_i) \bigr) 
  & (\text{since } m \leqslant \ell) \\
  && \leqslant &
  (\ell +1)^{\rk (f)} \cdot 
  \bigl( \textstyle\sum_{i=1}^n \Slow (t_i) \bigr).
  & (\text{since } \rk (g) < \rk (f))
  \end{align*}
  This together with induction hypothesis allows us to derive the inequality (\ref{eq2:t:poel}).
\label{case:ia}
\item \textsc{Case}. $t \cpoel{ts} u$:  $u = f(\seq[n]{u})$ where
  $\tup[n]{t} \prodext{\supertermstrict} \tup[n]{u}$
  holds.
  In this case the inequality (\ref{eq2:t:poel}) follows from induction hypothesis together together an easy
      observation that
  $\textstyle\sum_{i=1}^n \Slow (u_i) <
   \textstyle\sum_{i=1}^n \Slow (t_i)$.
\label{case:ts}
\end{enumerate}
Summing up Case \ref{case:ia} and \ref{case:ts} concludes inequality (\ref{eq2:t:poel}).
Thus, having $\Slow (b) = \sum_{i=1}^{k} \Slow (s_i)$ by Lemma \ref{l:slowsum}, and employing $k \leqslant \ell$, we see
\begin{align*}
  \Slow (b) %= \sum_{i=1}^{m} \Slow (s_i) 
  & \leqslant 
  \ell \cdot
  (\ell +1)^{(\ell +1)^{\rk (f)} \cdot 
             \bigl( \textstyle\sum_{i=1}^k \Slow (t_i) 
             \bigr) -1       
            }
  & \text{(by the inequality (\ref{eq2:t:poel}))} \\
  & < 
  (\ell +1)^{(\ell +1)^{\rk (f)} \cdot 
             \bigl( \textstyle\sum_{i=1}^k \Slow (t_i) \bigr)       
            }
  \tpkt
\end{align*}
This completes the case $t \cpoel{ialst} b$. 
The cases $t \cpoel{ia} b$ and $t \cpoel{ts} b$ follow 
 respectively from 
 Case \ref{case:ia} and \ref{case:ts}. 
\qed
\end{proof}

\subsection{Predicative Embedding of $\rew[\RS]$ into $\poel$}
Throughout the following, we fix a \poecompatible\ TRS $\RS$. 
We now establish the \emph{predicative embedding} of $\rew[\RS]$ into 
the order $\poel[\ell]$, for $\ell$ depending only on $\RS$. 
The \emph{predicative interpretation $\pint$} that we use in this embedding
separates safe from normal arguments resulting in a sequences of \emph{normalised terms}. 

\begin{definition}
  For each $f \in \FS$ with $k$ normal arguments, let $\fn$ denote a fresh function symbol
  of arity $k$.
  We set $\FSn \defsym \FS \cup \{ \fn \mid f \in \FS \}$. 
  A term $t \in \TA(\FSn,\VS)$ is called \emph{normalised} if it is of the form 
  $t = \fn(\seq[k]{t})$ for $t_i \in \TERMS$ ($i = 1,\dots,k$). 
\end{definition}

\begin{definition}\label{d:pint}
  We define the \emph{predicative interpretation $\pint$}, mapping terms to sequences of normalised terms, as follows:
\begin{equation*}
  \pint(t) \defsym 
  \begin{cases}
    \nil & \text{if $t \in \Val$,} \\
    \fn(\seq[k]{t}) \append \pint(t_{k+1}) \append \cdots \append \pint(t_{k+l}) 
    & \text{if $t \not\in \Val$.}
  \end{cases}
\end{equation*}
For the second case we suppose $t = f(\pseq{t})$.
\end{definition}

In the following, we show that a reduction
$f(\seq[k]{v}) \rew[\RS] s_1 \rew[\RS] s_2 \rew[\RS] \dots$ 
translates
into a sequence $\pint(f(\seq[k]{v})) \poel \pint(s_1) \poel \pint(s_2) \poel \dots$ 
for $\ell$ the maximal size of a right-hand side in $\RS$.
In the embedding, we use as precedence the projection of the precedence $\sp$ 
underlying $\RS$ to the normalised signature $\FSn$, defined by
\begin{align*}
  \fn \spl \gn \defiff f \sp g \quad\text{and}\quad f \spl g \defiff f \sp g \tpkt
\end{align*}

In the proof of this \emph{embedding} it is important 
to notice that rewriting happens never under normal argument positions. 
To this end we introduce a set $\Tn$, consisting of terms where normal arguments are values. 
\begin{definition}
We define $\Tn$ as the least such that 
(i) $\Val \subseteq \Tn$, and
(ii) if $f \in \FS$, $\seq[k]{v} \in \Val$ 
and $\seq[l]{t} \in \Tn$ 
then $f(\sn{\seq[k]{v}}{\seq[l]{t}}) \in \Tn$.
\end{definition}
Observe that $f(\seq[k]{v}) \in \Tn$ for values $v_i$ ($i = 1,\dots,k$). 
The set $\Tn$ is closed under rewriting. 
\begin{lemma}\label{l:tnderiv}
  If $s \in \Tn$ and $s \rew t$ then $t \in \Tn$.
\end{lemma}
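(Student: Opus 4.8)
The plan is to prove that $\Tn$ is closed under the rewrite relation $\rew[\RS]$ by induction on the structure of the derivation rule used, i.e., on the inductive definition of $\rew[\RS]$ given by clauses $(i)$ and $(ii)$. The crucial structural observation is that in a term $s \in \Tn$, all normal arguments are values, and since values contain no defined symbols, no redex can occur inside a normal argument position. Hence any rewrite step $s \rew t$ must take place either at the root or strictly inside a \emph{safe} argument, and this is exactly what keeps us inside $\Tn$.

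First I would handle the base case where $s \rew t$ arises by clause $(i)$, the root step. Here $s = f(l_1\sigma,\dots,l_n\sigma)$ with $\ofdom{\sigma}{\VS \to \Val}$, and $t = r\sigma$ for a rule $l \to r \in \RS$. Because $\sigma$ substitutes only values and $\RS$ is a \poecompatible\ orthogonal constructor TRS, I would argue that $r\sigma \in \Tn$. The key point is that \poecompatibility\ forces normal argument positions in $r$ to be filled, after substitution, by terms built solely from constructors and the normal argument variables of $l$; since $\sigma$ maps these variables to values and constructors are safe, every normal argument of every subterm of $r\sigma$ is again a value. More concretely, one shows by a side induction on $r$ that $r\sigma \in \Tn$, using that whenever $r$ has a subterm $g(\sn{r_1,\dots,r_m}{\dots})$, clause~\ref{d:poe:ia} of Definition~\ref{d:poe} (via the condition $s \nsupertermstrict t_j$) guarantees the normal arguments $r_j$ are subterms of normal arguments of $l$, hence instantiate to values under $\sigma$.

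For the induction step, where $s \rew t$ arises by clause $(ii)$, we have $s = f(\dots,s',\dots) \rew f(\dots,t',\dots) = t$ with $s' \rew t'$ at some argument position of $f$. Since $s \in \Tn$, its normal arguments are values and therefore cannot reduce; thus the position of $s'$ must be a \emph{safe} argument position, so $s' \in \Tn$ by definition of $\Tn$. Applying the induction hypothesis yields $t' \in \Tn$, and since the normal arguments of $s$ and $t$ coincide (both equal to the unchanged values) while the reduced safe argument $t'$ lies in $\Tn$, the defining clause (ii) of $\Tn$ gives $t = f(\sn{\vec v}{\dots,t',\dots}) \in \Tn$.

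I expect the main obstacle to be the base case, specifically verifying that $r\sigma \in \Tn$ from \poecompatibility. The subtlety is that Definition~\ref{d:poe} does not literally say ``normal arguments of $r$ are values after substitution''; rather, one must trace through clause~\ref{d:poe:ia}, whose condition $s \nsupertermstrict t_j$ states that each normal argument of a defined-symbol subterm of $r$ is a subterm of a normal argument of $s = l$. Since the rules are those of a constructor TRS and $\sigma$ ranges over values, these subterms are themselves values; constructors being safe, no normal position is ever reintroduced by safe subterms. Making this side induction on the right-hand side $r$ precise, and cleanly invoking the constructor and \poecompatibility\ hypotheses, is the technically delicate part, whereas the inductive step under clause~(ii) is routine.
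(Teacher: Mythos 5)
Your overall plan --- induction on the derivation of $s \rew t$, with the context case using that normal arguments of a term in $\Tn$ are irreducible values, and the root case settled by a side induction on the right-hand side $r$ guided by the clauses of $\poe$ --- is exactly the ``standard induction'' the paper alludes to (its own proof is a one-liner), and your clause-(ii) case is correct as stated. The gap is in the base case: the justification you give covers only subterms of $r$ oriented by clause~\ref{d:poe:ia}, where the condition $l \nsupertermstrict r_j$ indeed places each normal argument $r_j$ below a normal argument of $l$. But recursive calls in $r$ --- the very feature \POESTAR\ is designed to admit, cf.\ rule 2 of $\RS_{\m{exp}}$ in Example~\ref{ex:exp} --- carry the same root symbol $f$ as $l$ and cannot be oriented by clause~\ref{d:poe:ia}, since the precedence is strict and $f \sp f$ fails; they are oriented by clause~\ref{d:poe:ts}, where the normal arguments are constrained only by $\tuple{l_1,\dots,l_k} \prodext{\poe} \tuple{r_1,\dots,r_k}$ and not by $\nsupertermstrict$. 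As written, your side induction has no case for these subterms and stalls precisely on rules like $\m{exp}(\sn{\ms(\sn{}{x})}{y}) \to \m{exp}(\sn{x}{\m{exp}(\sn{x}{y})})$.

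The repair is short, and it is the same observation the paper makes in the proof of Lemma~\ref{l:embed:root}: since $\RS$ is a constructor TRS, the arguments $l_i$ of the left-hand side lie in $\TA(\CS,\VS)$, and on such terms $\poe$ collapses to the strict subterm relation, because clauses~\ref{d:poe:ia} and~\ref{d:poe:ts} of Definition~\ref{d:poe} require a defined root symbol and constructors have only safe argument positions. Hence in the clause-\ref{d:poe:ts} case each normal argument $r_i$ with $l_i \poe r_i$ is again a subterm of the constructor term $l_i$ and so instantiates to a value under $\sigma$, while the safe arguments satisfy $l \poe r_j$ and your side induction applies to them; the residual clause-\ref{d:poe:st} case, where the compared term is $\eqpoe$-below some argument of $l$, likewise yields a constructor term, whence $r\sigma \in \Val \subseteq \Tn$. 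With these two cases added, your base case, and with it the whole proof, goes through.
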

\begin{proof}
  The lemma follows a standard induction on the definition of $\poe$. 
\end{proof}

\begin{lemma}\label{l:embed:root}
  Let $\ofdom{\sigma}{\VS \to \Val}$ be a substitution. 
  Then $l\sigma \poel[\size{r}] r\sigma$ for all rule $l \to r \in \RS$.
\end{lemma}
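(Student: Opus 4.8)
I read the statement as the predicative embedding $\pint(l\sigma) \poel[\size{r}] \pint(r\sigma)$, the root case of the embedding of $\rew[\RS]$ into $\poel$ that this subsection sets up. The plan is to reduce to a single normalised term on the left, isolate two auxiliary facts, and then run an induction on the $\poe$-derivation witnessing $l \poe r$. First I would observe that, since $\RS$ is a constructor TRS, the arguments of the left-hand side $l = f(\sn{l_1,\dots,l_k}{l_{k+1},\dots,l_{k+l}})$ are constructor terms, so each $l_i\sigma$ is a value. Consequently $\pint(l\sigma)$ collapses, by Definition~\ref{d:pint}, to the single normalised term $\fn(l_1\sigma,\dots,l_k\sigma)$, which I abbreviate $\hat l$, and the goal becomes $\hat l \poel[\size{r}] \pint(r\sigma)$.

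The two facts I would record are: (a) if $v$ is a constructor term and $v \poe t$, then $t$ is a proper subterm of $v$, because clauses~\ref{d:poe:ia} and~\ref{d:poe:ts} require a defined root, so from a constructor-headed term only clause~\ref{d:poe:st} is available and iterating it keeps $t$ strictly inside $v$; consequently, for a constructor normal argument $l_i$, $l_i \poe t_i$ yields $l_i\sigma \supertermstrict t_i\sigma$ while $l_i = t_i$ yields $l_i\sigma = t_i\sigma$. (b) The sequence $\pint(t\sigma)$ has at most $\size{t}$ elements, by a routine induction on $t$ using that $\pint$ returns $\nil$ on values and descends only into safe arguments. As every term in play is a subterm of $r$, fact~(b) bounds all sequence lengths by $\size{r}$, matching the index required in clauses~\ref{d:poel:ia} and~\ref{d:poel:ialst}.

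The heart is the generalised claim: for every term $t$ with $l \poe t$ by a subderivation of $l \poe r$, one has $\hat l \poel[\size{r}] \pint(t\sigma)$; instantiating $t := r$ then yields the lemma. I would prove it by induction on the derivation of $l \poe t$ following Definition~\ref{d:poe}, writing $t = g(\sn{t_1,\dots,t_m}{t_{m+1},\dots,t_{m+n}})$. If $l \poe t$ holds by clause~\ref{d:poe:st}, then $t$ lies inside a constructor argument of $l$, so by~(a) the term $t\sigma$ is a value, $\pint(t\sigma) = \nil$, and $\hat l \poel[\size{r}] \nil$ holds by clause~\ref{d:poel:ialst} with the empty right-hand side. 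Otherwise $t\sigma \notin \Val$ and $\pint(t\sigma) = \gn(t_1\sigma,\dots,t_m\sigma) \append \pint(t_{m+1}\sigma) \append \cdots \append \pint(t_{m+n}\sigma)$; here I apply clause~\ref{d:poel:ialst}, which by~(b) reduces to dominating each element of this sequence by $\hat l$. The leading term $\gn(t_1\sigma,\dots,t_m\sigma)$ is handled by the matching head clause of $\poe$: under clause~\ref{d:poe:ia} the precedence $f \sp g$ gives $\fn \spl \gn$, and $l \nsupertermstrict t_j$ makes every $t_j\sigma$ a proper subterm of $\hat l$, so clause~\ref{d:poel:ia} applies; under clause~\ref{d:poe:ts} we have $g = f$, and the product decrease $\tuple{l_1,\dots,l_k} \prodext{\poe} \tuple{t_1,\dots,t_k}$ on the constructor normal arguments turns into $\tuple{l_1\sigma,\dots,l_k\sigma} \prodext{\supertermstrict} \tuple{t_1\sigma,\dots,t_k\sigma}$ by~(a), so clause~\ref{d:poel:ts} applies. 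Every remaining element of $\pint(t\sigma)$ comes from a safe argument $t_j$ with $l \poe t_j$, and is dominated by the induction hypothesis.

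The step I expect to be most delicate is the alignment of the two argument disciplines: $\poe$ constrains normal arguments negatively (through $\nsupertermstrict$ and the product order on constructor terms), whereas $\pint$ packs exactly those normal arguments into the single normalised term heading each sequence, where $\poel$ demands a $\supertermstrict$- or $\prodext{\supertermstrict}$-comparison. Getting fact~(a) and the translation ``normal arguments become proper subterms after $\sigma$'' exactly right---including the degenerate case where a normal argument is a variable, which forces an equality rather than a strict step in the product extension---is where the real content sits; the length bookkeeping from fact~(b) is routine.
\qed
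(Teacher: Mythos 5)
Your proposal is correct and follows essentially the same route as the paper's proof: both reduce the goal to comparing the single normalised term $\fn(l_1\sigma,\dots,l_k\sigma)$ against $\pint(r\sigma)$, establish the same key inductive claim for all subterms $t$ of $r$ with $l \poe t$ (with the same case analysis on the clauses of $\poe$, the same use of the constructor-term shape of $l$'s arguments to obtain $\supertermstrict$-comparisons after substitution, and the same $\size{r}$-bound on the length of the interpretation sequences). The only differences are cosmetic bookkeeping: you induct on the $\poe$-derivation with a fixed index $\size{r}$ and state the claim at sequence level via clause~\ref{d:poel:ialst}, whereas the paper inducts on $\size{t}$, states the claim elementwise with index $\size{t}$, and lifts indices by Lemma~\ref{l:poel:approx}.
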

According to  $s \caseref{\poel}{i} t$, we write 
$s \caseref{\poe}{i} t$ if 
$s \poe t$ follows from the \nth{$i$} clause in Definition~\ref{d:poe}. 

\begin{proof}
  Fix terms $s = f(\pseq[k][l]{s})$ with $f \in \DS$
  and $\seq[k+l]{s} \in \TA(\CS,\VS)$. 
  We first show that for all terms $t$, 
  \begin{equation}
    \label{eq:poe:embed}
    \tag{$\dag$}
    s \poe t \IImp \text{$\fn(s_1\sigma, \dots, s_k\sigma) \poel[\size{t}] u$ for all $u \in \pint(t\sigma)$}
    \tpkt
  \end{equation}
  Suppose $s \poe t$ holds, the proof is by induction on $\size{t}$. 
  The non-trivial case is when $t\sigma \not \in \Val$ as otherwise 
  $\pint(t\sigma) = \nil$.
  This excludes a priori the case $s \cpoe{st} t$ by the assumption on the shape of $s$. 
  Hence either  $s \cpoe{ia} t$ or $s \cpoe{ts} t$ holds, 
  and thus $t = g(\pseq[m][n]{t})$ for some $g \in \FS$
  and terms $t_j$ ($j = 1,\dots,m+n$), 
  By definition, 
  $$
  \pint(t\sigma) 
  = \lst{\gn(t_1\sigma, \dots, t_m\sigma)} 
  \append \pint(t_{m+1}\sigma) \append \cdots \append \pint(t_{m+n}\sigma) 
  \tpkt
  $$
  To prove the implication~\eqref{eq:poe:embed}, 
  consider first the element $u = \gn(t_1\sigma, \dots, t_m\sigma) \in \pint(t\sigma)$.

  Suppose first that $s \cpoe{ia} t$ holds. 
  Thus $f \sp g$ and hence $\fn \spl \gn$.
  Consider a normal argument position $j \in \{1,\dots,m\}$ of $g$. 
  The assumption $s \poe t$ gives $s \supertermstrict t_j$.
  Hence there exists a normal argument position $i \in \{1,\dots, k\}$ of $f$ 
  with $s_i \superterm t_j$, and hence $s_i\sigma \superterm t_j\sigma$ holds. 
  In total, $\fn(s_1\sigma,\dots,s_k\sigma) \supertermstrict t_j\sigma$ holds for all $j = 1,\dots,m$.
  Since trivially $m \leqslant \size{t}$, 
  we conclude 
  $\fn(s_1\sigma,\dots,s_k\sigma) \cpoel[\size{t}]{ia}  \gn(t_1\sigma, \dots, t_m\sigma)$ as desired.

  Finally, suppose $s \cpoe{ts} t$ holds, thus $t = f(\pseq[k][l]{t})$.
  Since $\tup[k]{s} \poeprod \tup[k]{t}$ holds in this case 
  and $s_i \in TA(\CS,\VS)$ for all $i = 1,\dots,k$, 
  it is not difficult to conclude that 
  $s_i \superterm t_i$ and hence
  $\tuple{s_1\sigma,\dots,s_k\sigma} \supertermstrict \tuple{t_1\sigma,\dots,t_k\sigma}$, 
  holds. 
  As trivially $k \leqslant \ell$, 
  we $\fn(s_1\sigma,\dots,s_k\sigma) \cpoel[\size{t}]{ts}  \gn(t_1\sigma, \dots, t_k\sigma)$ in this final case.

  Now consider the remaining elements $u \in \pint(t\sigma)$, $u \not=\gn(t_1\sigma, \dots, t_m\sigma)$.
  Then $u$ occurs in the interpretation of a safe argument of $t\sigma$ by definition of 
  the interpretation, say $u \in \pint(t_j\sigma)$
  for some $j \in \{m+1,\dots,m+n\}$. 
  One verifies that $s \poe t_j$ holds: 
  in the case $s \cpoe{ts} t$ we have $s \poe t_j$ by definition; 
  otherwise $s \cpoe{ia} t$ holds and we even obtain $s \cpoe{st} t_j$.
  As $\size{t_j} < \size{t}$, by induction hypothesis we have $\fn(s_1\sigma,\dots,s_k\sigma) \poel[\size{t_j}] u$, 
  and thus $\fn(s_1\sigma,\dots,s_m\sigma) \poel[\size{t}] u$ using Lemma~\eref{l:poel:approx}{inc}.
  Overall, we conclude the implication~\eqref{eq:poe:embed}.

  Fix a rule $l \to r \in \RS$ with $l = f(\seq[m][n]{l})$.
  We return to the main proof, and show that
  $\fn(l_1\sigma,\dots,l_m\sigma) \cpoel[\size{r}]{ialst} \pint(r\sigma)$ holds, 
  from which the lemma follows by one application of \cpoel[\size{r}]{ms}.
  \begin{itemize}
  \item The implication~\eqref{eq:poe:embed} and assumption $l \poe r$ 
    gives $\fn(l_1\sigma,\dots,l_m\sigma) \poel[\size{r}] u$ 
    and thus $\fn(l_1\sigma,\dots,l_m\sigma) \poel[\size{r}] u$ 
    for all elements $u \in \pint(r\sigma)$. 
  \item The length of $\pint(r\sigma)$ is bounded by $\size{r}$. 
    This can be shown by a standard induction on $r$, using 
    in the base case $x \in \VS$ that $\pint(r\sigma) = \nil$ as $\sigma(x) \in \Val$ for all variables
    in $r$. 
  \end{itemize}
\end{proof}

\begin{lemma}\label{l:embed}
  Let $\RS$ be a \poecompatible\ TRS and let $\ofdom{\sigma}{\VS \to \Val}$ be a substitution, 
  and let $\ell$ denote the maximal size of right-hand sides $r$ of rules $l \to r \in \RS$. 
  If $s \in \Tn$ and $s \rew[\RS] t$ then 
  $\pint(s) \poel[\ell] \pint(t)$. 
\end{lemma}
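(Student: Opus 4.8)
The plan is to prove the embedding by induction on the derivation of $s \rew[\RS] t$, following the inductive definition of the rewrite relation by the rules~(i) and~(ii). The structural fact I would lean on throughout is that, since $s \in \Tn$, every normal argument of $s$ is a value and hence free of defined symbols; consequently a redex can sit only at the root of $s$ or inside a safe argument. This pins down the two cases of the induction.

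In the base case a root step is applied, so $s = l\sigma$ and $t = r\sigma$ for some rule $l \to r \in \RS$ with $l = f(\sn{l_1,\dots,l_m}{l_{m+1},\dots,l_{m+n}})$. Because $\RS$ is a constructor TRS and $\sigma$ ranges over values, each $l_i\sigma$ is itself a value, so the interpretation collapses to the single normalised term $\pint(s) = \fn(l_1\sigma,\dots,l_m\sigma)$, the safe arguments contributing only $\nil$. I would then invoke Lemma~\ref{l:embed:root} to obtain $\pint(s) \poel[\size{r}] \pint(r\sigma) = \pint(t)$, and finish with one application of Lemma~\eref{l:poel:approx}{inc}, using $\size{r} \leqslant \ell$, to raise the comparison to level $\ell$.

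For the inductive step a context step is applied, and the observation above forces the rewrite into a safe argument: $s = f(\sn{v_1,\dots,v_k}{t_1,\dots,t_l})$ with the $v_i$ values, and $t = f(\sn{v_1,\dots,v_k}{t_1,\dots,t_{j-1},t_j',t_{j+1},\dots,t_l})$ for some safe position $j$ with $t_j \rew[\RS] t_j'$. Spelling out $\pint$, the two interpretations share a common prefix and suffix and differ only in the block coming from the rewritten argument, i.e.\ $\pint(s) = c_1 \append \pint(t_j) \append c_2$ and $\pint(t) = c_1 \append \pint(t_j') \append c_2$ with $c_1 = \fn(v_1,\dots,v_k) \append \pint(t_1) \append \cdots \append \pint(t_{j-1})$ and $c_2 = \pint(t_{j+1}) \append \cdots \append \pint(t_l)$. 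Since $s \in \Tn$ entails $t_j \in \Tn$, the induction hypothesis applies to $t_j \rew[\RS] t_j'$ and gives $\pint(t_j) \poel[\ell] \pint(t_j')$; Lemma~\eref{l:poel:approx}{seq} then lifts this through the surrounding context to $\pint(s) \poel[\ell] \pint(t)$.

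The main obstacle is not any single calculation but making the case distinction watertight, namely arguing that rule~(ii) can only fire at a safe position of a term in $\Tn$. This rests on the fact that values contain no defined symbols, so no redex hides in a normal argument, together with Lemma~\ref{l:tnderiv} to guarantee that the rewritten safe argument remains in $\Tn$ so that the induction hypothesis is available. A smaller point to keep in order is the index $\ell$: Lemma~\ref{l:embed:root} produces a comparison at level $\size{r}$, and it is precisely the uniform bound $\size{r} \leqslant \ell$ and the monotonicity from Lemma~\eref{l:poel:approx}{inc} that let the whole argument run at the single fixed level $\ell$.
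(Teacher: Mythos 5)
Your argument is in essence identical to the paper's proof: induction on the rewrite step, with the root case discharged by Lemma~\ref{l:embed:root} (read, as the paper itself does when invoking it, as the predicative version $\pint(l\sigma) \poel[\size{r}] \pint(r\sigma)$) and lifted to level $\ell$ via $\size{r} \leqslant \ell$ and Lemma~\eref{l:poel:approx}{inc}; and the context case forced into a safe argument by $s \in \Tn$, closed by the induction hypothesis (available since safe arguments of terms in $\Tn$ are again in $\Tn$) together with Lemma~\eref{l:poel:approx}{seq}. Your explicit bookkeeping of the index $\ell$ is, if anything, slightly more careful than the paper's.

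There is, however, one boundary case you miss, and there your intermediate identity is literally false: the contracted safe argument may have been the last obstruction to $t$ being a value, i.e.\ $t \in \Val$ is possible when the root symbol is a constructor, as in $\ms(\sn{}{\m{add}(\sn{\mZ}{v})}) \rew \ms(\sn{}{v})$ for a value $v$. In that case $\pint(t) = \nil$ by the first clause of Definition~\ref{d:pint}, whereas your claimed decomposition $c_1 \append \pint(t_j') \append c_2$ still contains the element $\fn(\dots)$, so $\pint(t) = c_1 \append \pint(t_j') \append c_2$ fails. The paper splits this case off explicitly (``the non-trivial case is when $t \notin \Val$, otherwise $\pint(t) = \nil$'') and treats it as trivial; to actually close it, note that $\pint(s)$ is a non-empty sequence of normalised terms, that every normalised term $u$ satisfies $u \cpoel{ialst} \nil$ (take the empty sequence, $l = 0 \leqslant \ell$, all conditions vacuous), and hence $\pint(s) \cpoel{ms} \nil = \pint(t)$. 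With that one observation added, your proof is complete and coincides with the paper's.
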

\begin{proof}
  Let $s \in \Tn$ and consider a rewrite step $s \rew[\RS] t$. 
  The base case is covered by Lemma~\ref{l:embed:root}, 
  hence consider a rewrite step below the root. 
  Since $s \in \Tn$, in Lemma~\ref{l:tnderiv} we already observed that this step is of the form
  \begin{multline*}
  s = f(\sn{\seq[m]{s}}{s_{m+1},\dots,s_i,\dots,s_{m+n}}) \\
  \rew[\RS] f(\sn{\seq[m]{s}}{s_{m+1},\dots,t_i,\dots,s_{m+n}}) = t 
  \tkom
  \end{multline*}
  where $s_i \rew[\RS] t_i$.
  The non-trivial case is when $t \not\in \Val$, otherwise $\pint(t) = \nil$. 
  Using the induction hypothesis $\pint(s_i) \poel[\ell] \pint(t_i)$ and Lemma~\eref{l:poel:approx}{seq} we obtain
  \begin{equation*}
    \begin{array}{r@{\,}c@{\,}l}
      \pint(s) 
      & = & 
      \fn(\seq[m]{s}) \append \pint(s_{m+1}) \append \cdots \append \pint(s_i) \append \cdots \append \pint(s_{m+n})\\
      & \poel[\ell] & 
      \fn(\seq[m]{s}) \append \pint(s_{m+1}) \append \cdots \append \pint(t_i) \append \cdots \append \pint(s_{m+n})\\
      & = & \pint(t)
      \tkom
    \end{array}
  \end{equation*}
  as desired.
\end{proof}

\begin{proof}[of Theorem~\ref{t:sound}]
  Consider a derivation
  $$
  f(\pseq[k][l]{v}) = t_0 \rew[\RS] t_1 \rew[\RS] \cdots \rew[\RS] t_{m} \tkom
  $$
  for values $v_i$ ($i=1,\dots,k+l$)
  with respect to a \poecompatible\ TRS $\RS$. 

  Define $\ell \defsym \max\{\size{r} \mid l \to r \in \RS\}$. 
  Since $t_0 \in \Tn$, Lemma~\ref{l:tnderiv} shows that that $t_i \in \Tn$ for all $i = 1,\dots,m$. 
  As a consequence of Lemma~\ref{l:embed}, using 
  Lemma~\eref{l:poel:approx}{inc}, 
  we obtain
  $$
  \pint(t_0)
  \poel[\ell] \pint(t_{1})
  \cdots
  \poel[\ell] \pint(t_{m})
  \tpkt
  $$
  So in particular the length $m$ is bounded by 
  the length of $\poel[\ell]$ descending sequences starting 
  from $\lst{\fn(\vec{u})}$, i.e., 
  $$
  m \leqslant \Slow(\pint(t_0)) = \Slow(\fn(\seq[k]{v}))
  \tpkt
  $$
  Here the equality is given definition of $\pint$ and by Lemma~\ref{l:slowsum}. 
  The theorem follows thus from Lemma~\ref{l:poel}.
\end{proof}

%%% Local Variables: 
%%% mode: latex
%%% TeX-master: "paper"
%%% End: 

\section{Conclusion}

Adopting former works \cite{AEM11,AEM12}, we introduced a reduction
order, the Path Order for ETIME (\POESTAR), that is sound and complete for ETIME
computable functions.
The path order \POESTAR\ is a strictly intermediate order between the (small) 
path order for polytime (\POPSTARS) and exponential path
order (\EPOSTAR).

These orders differ only in constraints imposed on recursive definitions:
\POESTAR\ extends \POPSTARS\ by allowing nested recursive calls, 
as in the TRS $\RS_{\m{exp}}$; 
the order \EPOSTAR\ permits additionally recursion along lexicographic descending 
arguments, as in rule 
$\rlbl{5}$ of the TRS $\RS_{\m{fac}}$. 
Consequently, from our three examples only the TRS $\RS_{\m{add}}$ is compatible 
with \POPSTARS, whereas $\RS_{\m{add}}$ and $\RS_{\m{exp}}$ is compatible with \POESTAR\ 
and \EPOSTAR\ can even handle $\RS_{\m{fac}}$. 
%
% For example,
% \EPOSTAR\ can handle all of $\RS_{\m{add}}$, $\RS_{\m{exp}}$ and 
% $\RS_{\m{fac}}$,
% \POESTAR\ can handle $\RS_{\m{add}}$ and $\RS_{\m{exp}}$, but
% \POPSTARS\ can only handle $\RS_{\m{add}}$ among the three systems.

% Notably, the difference in definition lies only in Case~\ref{d:poe:ts}
% in Definition~\ref{d:poe}.
% In \POPSTARS, only element-wise comparison, i.e., product comparison, is
% allowed not only for normal arguments but also for safe arguments.
% On the other side, in \EPOSTAR, for normal arguments even a restrictive lexicographic comparison is allowed.
This contrast clarifies the relationship 
$\mathrm{P} \subseteq \mathrm{ETIME} \subseteq \mathrm{EXP}$
for the class $\mathrm{P}$ of polytime predicates and 
the class $\mathrm{EXP}$ of exponential-time ones.

%\bibliographystyle{abbrv}
%\bibliography{references}

%\input{bib}

%\newpage
%\appendix

\end{document}